\def\ps@pprintTitle{%
 \let\@oddhead\@empty
 \let\@evenhead\@empty
 \def\@oddfoot{\centerline{\thepage}}%
 \let\@evenfoot\@oddfoot}
\newtheorem{lemma}{Lemma}
\newtheorem{theorem}{Theorem}
\newproof{proof}{Proof}
\newdefinition{remark}{Remark}
\newdefinition{example}{Example}
\begin{document}

\begin{frontmatter}

\title{Euclidean 1-center of a set of static and mobile points\tnoteref{mytitlenote}}

\tnotetext[fn1]{The first author would like to thank NBHM, DAE, Govt. of India and the second author would like to thank CSIR, Govt. of India for their financial support to carry out research. }

%% Group authors per affiliation:
\author[math]{Kaustav Bose\corref{cor1}}
\ead{kaustavb@research.jdvu.ac.in}

\author[math]{Ranendu Adhikary}
\ead{ranendua@research.jdvu.ac.in}

\author[it]{Sruti Gan Chaudhuri}
\ead{srutiganc@it.jusl.ac.in}

\author[math]{Buddhadeb Sau}
\ead{bsau@math.jdvu.ac.in}

\address[math]{Department of Mathematics, Jadavpur University, West Bengal, India}

\address[it]{Department of Information Technology, Jadavpur University, West Bengal, India}

\cortext[cor1]{Corresponding author}

%\fntext[fn1]{This is the specimen author footnote.}

%% or include affiliations in footnotes:
%%\author[mymainaddress,mysecondaryaddress]{Elsevier Inc}
% \ead[url]{www.elsevier.com}
%
% \author[mysecondaryaddress]{Global Customer Service\corref{mycorrespondingauthor}}
% \cortext[mycorrespondingauthor]{Corresponding author}
% \ead{support@elsevier.com}
%
% \address[mymainaddress]{1600 John F Kennedy Boulevard, Philadelphia}
% \address[mysecondaryaddress]{360 Park Avenue South, New York}

\begin{abstract}

In this paper, we consider the problem of computing the algebraic parametric equation of the Euclidean 1-center function in $\mathbb{R}^d$, $d \geq 2$, for a system of $n$ static points and $m$ mobile points having motion defined by rational parametric functions. We have shown that the corresponding Euclidean 1-center function is a piecewise differentiable function and have derived its exact parametric algebraic equation. If the positions of the static points and the rational parametric equations of the motion of the mobile points are given, we have proposed algorithms that compute the parametric equation of the Euclidean 1-center function.

\end{abstract}

\begin{keyword}
Euclidean 1-center \sep Smallest enclosing ball \sep Mobile Facility Location \sep Farthest-point Voronoi diagram
% \MSC[2010] 00-01\sep  99-00
\end{keyword}

\end{frontmatter}

\section{Introduction}\label{intro}

The $\emph{k-center problem}$ is a fundamental combinatorial optimization problem in facility location. In the most general form, for a finite set of points $P$ in a metric space $(X, \delta)$ and a positive integer $k$, the $\emph{k-center}$ of $P$ is defined as a set $F \subseteq X$ of $k$ points, that minimizes $\max\limits_{p \in P} \min\limits_{c \in F} \delta(p, c)$. When the metric space is Euclidean it is called the Euclidean $k$-center. In the special case of $k = 1$, there is an explicit geometric characterization: the Euclidean $1$-center of a finite set $P$ is the center of the unique ball of smallest radius enclosing it. Thus computation of the Euclidean 1-center of a set of points is geometrically equivalent to finding its smallest enclosing ball. While there is a fairly rich literature devoted to the smallest enclosing ball problem for a set of static points, motivated by recent advances in mobile computing, telecommunication, and geographic information system, the problem has also attracted considerable interest for the mobile case.

Finding the Euclidean 1-center is fundamental in facility location problems. Facility location is a branch of operations research and computational geometry concerned with the optimal placement of one or more facilities in a way that minimizes the distance between the facilities and the clients. It encompasses a wide range of real life problems, including placement of manufacturing plants, warehouses, fire stations, hospitals, cell phone towers etc, to name a few. The problems of static facility location have been studied extensively. Recently, the classical problems of facility location have been posed in mobile setting \cite{BBKS00, BBKS06, Agarwal01, Agarwal05, Gao01, Durocher06}. Consequently, proximity queries like Euclidean $k$-center, Euclidean $k$-median etc for a set of moving points have attracted a lot of interest both from theoretical and applied perspectives. Apart from the facility location problems, the smallest enclosing ball, as a fundamental primitive in computational geometry, also has applications in various fields like computer graphics, robotics, military operations, data mining, and machine learning. For instance consider the mobile version of the ``bomb problem'': for a set of continuously moving targets, at any instant, the center of their smallest enclosing ball is the optimal place to drop a bomb of minimum strength in order to inflict maximum damage.

The earliest instance of the Euclidean $1$-center problem dates back to 1857 when the problem of finding the smallest enclosing disk of a set of $n$ points in the Euclidean plane was posed by Sylvester \cite{Sylvester}. For a long time, the algorithms developed were superlinear in $n$, the number of input points. A breakthrough was made by Megiddo \cite{Megiddo83} in 1983, when he first gave a deterministic optimal $O(n)$ algorithm for any fixed dimension $d$. However, Megiddo's algorithm was slow due to a $2^{2^d}$ dependence on the dimension $d$. Welzl \cite{Welzl91} proposed a simple randomized algorithm that runs in expected $O(n)$ time, based on the Linear Programming algorithm of Seidel \cite{Seidel90}. Fischer et al. \cite{Fischer03} presented a simple combinatorial algorithm, which based on a novel dynamic data-structure that can provide a fast and robust floating-point implementation, efficiently computing the smallest enclosing ball of point sets in dimensions up to 2,000.
%In contrast to Megiddo’s method, Welzl's algorithm is easy to implement and fast in practice, for lower dimensions.
%G{\"a}rtner updated the move-to-front heuristic of Welzl by introducing a pivot mechanism resembling the simplex method for linear programming.

\subsection{Related works}
The mobile version of the Euclidean 1-center problem was first introduced by Bespamyatnikh et al. \cite{BBKS00}. With mobile facilities, modeled by points having continuous motion, with their complete trajectory not fully known in advance (i.e., each mobile point follows a posted \emph{flight plan}, but can change its trajectory by submitting a \emph{flight plan update}), they proposed algorithms for maintenance of the Euclidean 1-center in the plane based on the\emph{ kinetic data structures} (KDS) introduced by Basch et al \cite{Basch99}. They showed that the mobile Euclidean 1-center may have unbounded velocity, i.e., for any $v \geq 0$ there is a set of three points in $\mathbb{R}^d$ ($d \geq 2$) with unit velocity that induces an instantaneous velocity greater than $v$ of the Euclidean 1-center. This result motivated the problem of finding bounded-velocity approximations of the mobile Euclidean 1-centre by other center functions like Steiner center, center of mass etc \cite{Durocher06}. Demaine et al. \cite{Demaine10} constructed a kinetic data structure for calculating the smallest enclosing ball (disk) of a set of moving points in the plane, based on certain properties of the farthest-point Delaunay triangulation initially suggested in \cite{Guibas98}. Their data structure generates $O(n^{3 + \epsilon})$ events for $n$ points with polynomial motion of fixed degree and has efficiency $O(n^{1 + \epsilon})$, which is the best that can be achieved for any data structure
based on farthest-point Delaunay triangulations. Constructing an efficient KDS for maintaining the smallest enclosing ball in higher dimensions ($d \geq 3$) is still an open issue \cite{Demaine10, Guibas04}. Recently Banik et al. \cite{Banik14} gave a complete geometric characterization of the locus of the mobile Euclidean 1-center for a set of $n$ static points $S$ and a single mobile point moving along a straight line $l$ in the Euclidean plane. They showed that the locus is continuous and piecewise differentiable linear and each of its differentiable pieces lies either on the edges of the farthest-point
Voronoi diagram of $S$, or on a line segment parallel to the line $l$. Given the positions of the static points, the locus of the mobile point (the straight line $l$), and the farthest-point Voronoi diagram of $S$, they proposed an $O(n)$ algorithm to compute the locus of the Euclidean 1-center.

\subsection{Our contribution}

To the best of our knowledge, most of the work hitherto done has been mainly directed towards effective maintenance of the Euclidean 1-center using kinetic data structures and constructing an efficient algorithm for higher dimensions is still an open issue. Banik et al. \cite{Banik14} gave an algorithm to find the path traced by the mobile Euclidean 1-center of a set of $n$ static points and a single mobile point moving along a straight line $l$ in the Euclidean plane. In this paper we have considered the problem of finding the algebraic parametric equation of the Euclidean 1-center function in the general setting, namely, for a system of $n$ static points and $m$ mobile points whose motions are defined by rational parametric curves in the $d$-dimensional Euclidean space. A parametric description in terms of time represents the motion of the Euclidean 1-center unambiguously and is decisive in a number of practical queries like its position at a time instant, velocity, etc. We have shown that the Euclidean 1-center function is a piecewise differentiable function and have derived its exact algebraic equation. If the positions of the static points and the rational parametric equations of the motion of the mobile points are given, we have proposed algorithms to compute the algebraic parametric equation of the Euclidean 1-center function.

First, in section \ref{defn} we give some preliminary definitions. In section \ref{theory} we prove certain properties of the Euclidean 1-center function. In particular we give a complete algebraic description of the function. Based on the theoretical results derived in section \ref{theory}, we discuss methods to compute the Euclidean 1-center function in section \ref{algo}.

  %In case of a single mobile point, given a differentiable piece of the 1-center function our algorithm computes the subsequent differentiable piece in $O(n)$ time (assuming that the farthest-point Voronoi diagram and the smallest enclosing ball the set of static points is given or already computed). However as the number of mobile points increases the algorithm becomes highly nonlinear in both $n$ and $m$.

%  To the best of our knowledge, there has been no attempt so far to give a precise algebraic description of the Euclidean 1-center in mobile %setting.
\section{Definitions and notations}\label{defn}

 In this section we give some preliminary definitions and notations that will be frequently used throughout the paper. Let $P = \{p_1,p_2,\ldots,p_k\}$ be a set of points in $\mathbb{R}^d$, $d \geq 2$. A point $x \in \mathbb{R}^d$ is said to be an affine combination
of the points in $P$ if $x = \sum_{i=1}^{k}a_ip_i$, for some $a_i \in \mathbb{R}$ with $\sum_{i=1}^{k}a_i = 1$. The points in $P$ are called
\emph{affinely independent} if no point in $P$ can be written as an affine combination of the remaining points. In $\mathbb{R}^d$, at most $d + 1$ points
can be affinely independent. For $0 \leq k \leq d$, the set of all affine combinations of $k + 1$ affinely independent points in $\mathbb{R}^d$ is called a \emph{$k$-dimensional affine subspace} or a \emph{$k$-flat} in $\mathbb{R}^d$. In particular, a $0$-flat is a point, a $1$-flat is a straight line, a $2$-flat is a plane, etc. Two $k$-flats $A$ and $A'$ are said to be \emph{parallel}, if there is a vector $v$ such that $A' = A + v$. For a set $S \subseteq \mathbb{R}^d$, the \emph{affine hull} of $S$, denoted by $Aff(S)$, is the smallest affine subspace in $\mathbb{R}^d$ containing $S$ or equivalently, the set of all affine combinations of elements of $S$.

The \emph{convex hull} of a set $S \subseteq \mathbb{R}^d$, denoted by $Conv(S)$, is the smallest convex set containing $S$. The convex hull of a set of $k + 1$ affinely independent points is called a \emph{$k$-simplex}. Hence, a $0$-simplex is a point, a $1$-simplex is a line segment, a $2$-simplex is a triangle, a $3$-simplex is a tetrahedron, etc.  The convex hull of a set of $n+1$ ($n \leq k$) vertices of a $k$-simplex $\tau$ is called an \emph{$n$-face} of $\tau$.

A \emph{$d$-ball} in $\mathbb{R}^d$ of radius $r$ centered at $c$ is defined as $B_r(c) = \{x \in \mathbb{R}^d \mid \|x - c\| \leq r\}$. For $1 \leq k \leq d$, a \emph{$k$-ball} in $\mathbb{R}^d$ is defined as a proper intersection between a $k$-flat and a $d$-ball. By proper intersection we exclude the cases where the $k$-flat is tangent to the $d$-ball or the intersection is empty. Hence, a $1$-ball is a line segment, a $2$-ball is a disk, a $3$-ball is a solid sphere, etc.

For a set of points $P \subset \mathbb{R}^d$, the smallest enclosing ball of $P$, denoted by $SEB(P)$, is the smallest $d$-ball $B$ such that $P \subset B$. The smallest enclosing ball of a set of points exists uniquely. For a finite set of points $P \subset \mathbb{R}^d$, the circumball of $P$, denoted by $CB(P)$, is the smallest $d$-ball $B$ such that $P \subset \partial(B)$ and the center of $CB(P)$ is called the circumcenter of $P$, denoted by $cc(P)$. ($\partial(B)$ denotes the boundary of $B$.) The circumball of a set of points may not exist. However if the points in $P$ are affinely independent, then there exists a unique circumball.

Let $P = \{p_1,p_2,\ldots,p_n\}$ be a set of distinct points in $\mathbb{R}^d$, where $n \geq 2$. For $p_i \in P$, the \emph{farthest-point Voronoi cell} of $p_i \in P$ is the set of all points in $\mathbb{R}^d$ which are farther from $p_i$ than any other point in $P$, that is, $$\mathcal{FPV}or(p_i) = \{x \in \mathbb{R}^d \mid \|x - p_i\| \geq \|x - p_j\|, \forall i \neq j\}.$$ The \emph{farthest-point Voronoi diagram} generated by $P$, denoted as $\mathcal{FPVD}(P)$, is the partition of $\mathbb{R}^d$ into the farthest-point Voronoi cells of point of $P$. Points shared by two or more farthest-point Voronoi cells constitute the \emph{farthest-point Voronoi faces}. A set $P$ of distinct points in $\mathbb{R}^d$ will be said to be in \emph{general position} if no affinely dependent subset of $P$ lie on the boundary of any $d$-ball. If the points in $P$ are in general position, then a \emph{farthest point Voronoi $k$-face} of $\mathcal{FPVD}(P)$ is equidistant from exactly $d - k + 1$ points of $P$. So, a farthest point Voronoi $k$-face of $\mathcal{FPVD}(P)$ defined by a set of $d - k + 1$ points $P' \subseteq P$, denoted by $\mathcal{F}(P')$, is the set of all points $x \in \mathbb{R}^d$ such that $\|x - p_i\| = \|x - p_j\|, \forall p_i, p_j \in P'$ and $\|x - p_i\| > \|x - p_l\|, \forall  p_i \in P', p_l \in P \setminus P'$.

%A \emph{simplicial complex} in $\mathbb{R}^d$ is a set $K$ consisting of a finite set of simplices in $\mathbb{R}^d$ such that every face of a simplex in $K$ also belongs to $K$ and for any two simplices with non-empty intersection meet along a common face. For a set of distinct points $P = \{p_1,p_2,\ldots,p_n\}$ in $\mathbb{R}^d$ in general position, the \emph{farthest-point Delaunay complex} $\mathcal{D}el(P)$ of $P$ is a simplicial complex defined as follows: $\mathcal{D}el(P)$ contains the $k$-simplex $\{p_1,\ldots,p_{k+1}\}$ if and only if $$\mathcal{FPV}or(p_1) \cap \ldots \cap \mathcal{FPV}or(p_{k+1}) \neq \emptyset.$$

%For a detailed exposition the readers are referred to \cite{Shamos12, Berg00}.

%\subsection{Formal description of the problem}
In this paper, we consider the problem of computing the algebraic parametric equation of the mobile Euclidean 1-center in $\mathbb{R}^d$, $d \geq 2$, for a system of $n$ static points and $m$ mobile points whose motion is defined by rational parametric curves. Here, by a rational parametric curve we mean a parametric curve $\nu = (\nu ^1,\dots,\nu^d) :I \rightarrow \mathbb{R}^d$  where each of its components is of the form  $\nu^i(t) = \frac{p^i(t)}{q^i(t)}$, with $p^i(t),q^i(t)$ being polynomials in $t$ with $q^i(t) \neq 0$. We consider $S = \{s_1,\ldots,s_n\} \subseteq \mathbb{R}^d$, the set of static points in general position and $V = \{\nu_1,\ldots,\nu_m\}$, the ``set'' of mobile points with each of its motion defined by rational parametric curves $\nu_k:I \rightarrow \mathbb{R}^d$, where $I$ is a compact (closed and bounded) interval. For a mobile point $\nu_i$, $\nu_i(t) \in \mathbb{R}^d$ is its position at the time instant $t \in I$.  By $V(t)$ we shall denote $\{\nu_1(t),\ldots,\nu_m(t)\}$, the set of positions of the mobile points at $t \in I$. Then the Euclidean 1-center function $\varepsilon:I \rightarrow \mathbb{R}^d$ is defined as $\varepsilon(t) =$ the center of $SEB(S \cup V(t))$.

%//////////////////////////////////////////////////////////////////////////////////////////////////////////////////////////////////////////////////
%//////////////////////////////////////////////////////////////////////////////////////////////////////////////////////////////////////////////////
\section{Characteristics of the Euclidean 1-center function}\label{theory}

% Theorem \ref{WhereItLies} is a higher dimensional generalization of the main theorem in \cite{Banik14}.
%
\begin{lemma}\label{WhereItLies1}
For a system of $n$ static points $S$ and $m$ mobile points $V$, the Euclidean 1-center of $S \cup V(t)$ lies on a farthest point Voronoi face of $\mathcal{FPVD}(S)$ whenever there are at least two static points on the boundary of the smallest enclosing ball of $S \cup V(t)$.
\end{lemma}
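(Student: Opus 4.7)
The plan is to show that the center $c = \varepsilon(t)$ automatically satisfies the equidistance-plus-strict-inequality conditions that define a farthest-point Voronoi face of $\mathcal{FPVD}(S)$, so that membership in such a face drops out directly from the definitions.

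First I would fix a time $t \in I$, let $r$ denote the radius of $SEB(S \cup V(t))$, and set $S' = \{s \in S : \|c - s\| = r\}$. By the hypothesis $|S'| \geq 2$. Two observations are immediate from $c$ being the center of the smallest enclosing ball of $S \cup V(t)$: every $s \in S'$ satisfies $\|c - s\| = r$ by construction, while every $s' \in S \setminus S'$ lies strictly inside this ball and therefore satisfies $\|c - s'\| < r$.

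Combining these, for any $s_i, s_j \in S'$ we have $\|c - s_i\| = \|c - s_j\| = r$, and for any $s_i \in S'$ and $s_l \in S \setminus S'$ we have $\|c - s_i\| = r > \|c - s_l\|$. These are exactly the defining conditions for $c \in \mathcal{F}(S')$ recorded in Section \ref{defn} (invoking general position of $S$ to legitimately speak of the face $\mathcal{F}(S')$). Since $|S'| \geq 2$, the face $\mathcal{F}(S')$ has dimension $d - |S'| + 1 \leq d - 1$, so it is a proper farthest-point Voronoi face rather than a full-dimensional cell, which is exactly the conclusion claimed.

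There is no real technical obstacle here: the argument is a clean translation between two equivalent descriptions of ``the points of $S$ attaining maximum distance from $c$ are precisely the elements of $S'$,'' one phrased through the smallest enclosing ball and the other through the farthest-point Voronoi diagram. The mobile set $V(t)$ does not enter the reasoning at all beyond determining, via the particular ball that encloses $S \cup V(t)$, which subset $S' \subseteq S$ happens to lie on its boundary; once $S'$ is identified, the conclusion follows by unpacking the two definitions.
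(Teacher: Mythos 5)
Your proof is correct and follows essentially the same route as the paper's: identify the static points on the boundary of $SEB(S \cup V(t))$, observe equidistance from the center for those points and strict inequality for the rest, and read off membership in the farthest-point Voronoi face from the definition. The only cosmetic difference is that you define $S'$ as the distance-$r$ level set while the paper enumerates the boundary points and invokes general position to bound their number by $d+1$.
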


\begin{proof}
Since the points of $S$ are in general position, at any time instant $t \in I$ at most $d + 1$ points of $S$ can lie on the boundary of the $SEB(S \cup V(t))$. Suppose that at some $t \in I$, precisely $k$ points of $S$, say $s_1, s_2, \ldots, s_k$, lie on  $\partial(SEB(S \cup V(t)))$, where $1 < k \leq d + 1$. Since $s_1, s_2, \ldots, s_k$ lie on the boundary of the smallest enclosing ball,
   $$\|\varepsilon(t) - s_1\| = \|\varepsilon(t) - s_2\| = \ldots = \|\varepsilon(t) - s_k\|,$$
 and as all other points of $S$ lie in the interior of the the smallest enclosing ball,
   $$\|\varepsilon(t) - s_i\| > \|\varepsilon(t) - s_j\|, \forall s_i \in \{s_1, s_2, \ldots, s_k \} \ \ and \ \ s_j \in S \setminus \{s_1, s_2, \ldots, s_k \}.$$
 
 Hence $\varepsilon(t)$ lies on the farthest point Voronoi face defined by $s_1, s_2, \ldots, s_k$ at $t$. $\Box$
\end{proof}

However, the converse of lemma \ref{WhereItLies1} is not true. The Euclidean 1-center $\varepsilon(t)$ may lie on a farthest point Voronoi face of $\mathcal{FPVD}(S)$, yet its boundary may not contain any static point at all. However, in case of a single mobile point we have the following `if and only if' condition. This will be useful during the computation of the Euclidean 1-center function in section \ref{algo}.

\begin{lemma} \label{WhereItLies2}
For a set of static points $S \subset \mathbb{R}^d$, and a single mobile point whose motion is given by $\nu :I \rightarrow \mathbb{R}^d$, the Euclidean 1-center of $S \cup \{\nu(t)\}$, at some instant $t \in I$, lies on a farthest point Voronoi face $\mathcal{F}(S')$, $S' \subseteq S$, if and only if $S' = S \cap \partial(SEB(S \cup \{\nu(t)\}))$, $|S'| \geq 2$.
\end{lemma}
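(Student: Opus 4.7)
The plan is to prove the two implications separately. The backward direction is essentially a restatement of Lemma \ref{WhereItLies1}: assuming $S' = S \cap \partial(SEB(S\cup\{\nu(t)\}))$ with $|S'|\geq 2$, taking the set of static boundary points to be $S'$ in Lemma \ref{WhereItLies1} places $\varepsilon(t)$ on the farthest-point Voronoi face defined by $S'$, namely $\mathcal{F}(S')$.

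The forward direction is the substantive part. Let $R(t)$ be the SEB radius and set $T := S \cap \partial(SEB(S\cup\{\nu(t)\}))$; the goal is to show $T = S'$. I would define $r := \|\varepsilon(t) - s\|$, which is the same value for every $s \in S'$ by definition of $\mathcal{F}(S')$ and satisfies $r = \max_{s\in S}\|\varepsilon(t) - s\|$ because every other static point is strictly closer. Since every static point lies in the SEB, $r \leq R(t)$. If $r = R(t)$, then $T = S'$ follows immediately: every $s\in S'$ satisfies $\|\varepsilon(t) - s\| = R(t)$, so $s \in T$; every $s \in S\setminus S'$ satisfies $\|\varepsilon(t) - s\| < R(t)$, so $s \notin T$.

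The crux, then, is ruling out the degenerate case $r < R(t)$. Suppose it held. Then no static point would lie on $\partial(SEB)$, and since $R(t) = \max_{p\in S\cup\{\nu(t)\}}\|\varepsilon(t)-p\|$, one must have $\|\varepsilon(t) - \nu(t)\| = R(t) > 0$; in particular $\nu(t)\neq\varepsilon(t)$. Translating the center by $\delta u$, where $u$ is the unit vector from $\varepsilon(t)$ toward $\nu(t)$ and $\delta > 0$ is small, yields $\|(\varepsilon(t)+\delta u)-\nu(t)\| = R(t)-\delta$, while the triangle inequality gives $\|(\varepsilon(t)+\delta u)-s\| \leq r+\delta$ for every $s\in S$. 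Choosing $0<\delta < R(t)-r$ makes both quantities strictly less than $R(t)$, producing an enclosing ball of smaller radius and contradicting the minimality of $R(t)$. This perturbation step is the main obstacle; once it excludes the case $r<R(t)$, the rest of the argument is direct bookkeeping on the definitions of $\mathcal{F}(S')$ and the SEB.
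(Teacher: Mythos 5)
Your proof is correct, and both directions match the paper's intent (the paper also derives the backward direction from Lemma~\ref{WhereItLies1}), but your forward direction is organized differently from the paper's. The paper starts from the standard fact that the boundary of a smallest enclosing ball must contain at least two of the enclosed points; since there is only one mobile point, this forces some static point $s$ onto $\partial(SEB(S\cup\{\nu(t)\}))$, and the paper then argues $s \in S'$ by contradiction: if $s \notin S'$, the definition of $\mathcal{F}(S')$ would make every point of $S'$ strictly farther from $\varepsilon(t)$ than $s$ is, hence outside the ball. Once one boundary point of $S$ is known to lie in $S'$, equidistance gives all of $S'$ on the boundary and strict inequality excludes the rest of $S$. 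You instead compare $r = \max_{s\in S}\|\varepsilon(t)-s\|$ with the radius $R(t)$ and eliminate the case $r < R(t)$ by an explicit center-perturbation toward $\nu(t)$; this is in effect a self-contained proof of exactly the special case of the ``at least two boundary points'' fact that the paper invokes as a black box. What your route buys is self-containment and a cleaner bookkeeping step once $r = R(t)$ is established; what the paper's route buys is brevity, since it reuses a standard SEB property rather than reproving it. One small implicit step in your argument worth making explicit: the identity $R(t) = \max_{p\in S\cup\{\nu(t)\}}\|\varepsilon(t)-p\|$ (otherwise you could shrink the ball about the same center), which you need before concluding that $\nu(t)$ alone attains distance $R(t)$ when $r < R(t)$.
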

\begin{proof}
Suppose that at some $t \in I$, $\varepsilon(t) \in \mathcal{F}(S')$, $S' = \{s_1, s_2, \ldots, s_k \} \subseteq S$, $1 < k \leq d + 1$. Since the boundary of the smallest enclosing ball of a set of points must contain at least two points, $S \cap \partial(SEB(S \cup \{\nu(t)\})) \neq \emptyset$. So let $s \in S \cap \partial(SEB(S \cup \{\nu(t)\}))$. If $s \notin S'$, then
$$\|\varepsilon(t) - s_i\| > \|\varepsilon(t) - s\|, \forall s_i \in S' = \{s_1, s_2, \ldots, s_k \},$$
since $\varepsilon(t) \in \mathcal{F}(S')$. But this contradicts the fact that $S' \subset SEB(S \cup \{\nu(t)\})$. Hence, $s \in S'$, say $s = s_1$. Now since we also have
$$\|\varepsilon(t) - s_1\| = \|\varepsilon(t) - s_2\| = \ldots = \|\varepsilon(t) - s_k\|,$$
$S' \subset \partial(SEB(S \cup \{\nu(t)\}))$. No other point of $S$ lies on $\partial(SEB(S \cup \{\nu(t)\}))$ as
$$\|\varepsilon(t) - s_i\| > \|\varepsilon(t) - s_j\|, \forall s_i \in \{s_1, s_2, \ldots, s_k \} \ \ and \ \ s_j \in S \setminus \{s_1, s_2, \ldots, s_k \}.$$
Hence, we have $S' = S \cap \partial(SEB(S \cup \{\nu(t)\}))$. The converse part immediately follows from the proof of lemma \ref{WhereItLies2}. $\Box$

\end{proof}

Now consider three points $A, B, C$ in the Euclidean plane. Their smallest enclosing ball (disk) has all of them lying on its boundary if and only if $ABC$ is an acute-angled or a right-angled triangle. If $ABC$ is an obtuse-angled triangle with $\angle BAC > \frac{\pi}{2}$, then only $B$ and $C$ lie on the boundary of the $SEB$, with $BC$ being the diameter. Hence in both cases it is easy to see that the the smallest enclosing ball is the circumball of the points on its boundary. This in fact holds true for any set of points even in higher dimensions. With constructions and arguments similar to that in \cite[p. 88]{Berg08} it can be easily proved that the smallest enclosing ball of a finite set of points is the smallest enclosing ball, and hence the circumball, of the points lying on its boundary.

\begin{lemma} \label{SEBandCB}
Let $S \subset \mathbb{R}^d$ be a finite set of points with $|S| \geq 2$ and $B_0 = SEB(S)$. If $T = S \cap \partial(B_0)$, then $B_0 = SEB(T) = CB(T)$.\end{lemma}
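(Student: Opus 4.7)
The plan is to prove the two equalities separately: first $B_0 = SEB(T)$, which is the substantive step, and then $SEB(T) = CB(T)$, which follows almost immediately from the definitions.

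For the first equality, since $T \subseteq S \subseteq B_0$, the ball $B_0$ encloses $T$, so the radius $r_1$ of $B_1 := SEB(T)$ is at most the radius $r_0$ of $B_0$. I would argue by contradiction that $r_1 < r_0$ cannot hold. Let $c_0$ and $c_1$ denote the centers of $B_0$ and $B_1$. I would interpolate by defining, for $\lambda \in [0,1]$, the ball $B_\lambda$ with center $c_\lambda := (1-\lambda)c_0 + \lambda c_1$ and radius $r_\lambda := (1-\lambda)r_0 + \lambda r_1$. For $p \in T$, the identity $p - c_\lambda = (1-\lambda)(p - c_0) + \lambda(p - c_1)$ together with $\|p - c_0\| = r_0$ and $\|p - c_1\| \le r_1$ gives $\|p - c_\lambda\| \le r_\lambda$ by the triangle inequality, so $T \subseteq B_\lambda$. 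For $p \in S \setminus T$, $p$ lies in the open interior of $B_0$, and since $S \setminus T$ is finite, there is a uniform $\epsilon > 0$ with $\|p - c_0\| \le r_0 - \epsilon$ for every such $p$. Then $\|p - c_\lambda\| \le \|p - c_0\| + \lambda \|c_1 - c_0\| \le r_0 - \epsilon + \lambda \|c_1 - c_0\|$, and this is at most $r_\lambda = r_0 - \lambda(r_0 - r_1)$ once $\lambda \le \epsilon / (\|c_1 - c_0\| + r_0 - r_1)$. For such $\lambda > 0$, $B_\lambda$ would be an enclosing ball of $S$ with radius strictly less than $r_0$, contradicting the minimality of $B_0 = SEB(S)$. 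By the uniqueness of the smallest enclosing ball, I then conclude $SEB(T) = B_0$.

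For the second equality, $T \subseteq \partial(B_0)$ by definition of $T$, so $B_0$ is itself a ball whose boundary contains $T$. Conversely, any ball $B$ with $T \subseteq \partial(B)$ also satisfies $T \subseteq B$, so $B$ is an enclosing ball of $T$ and its radius is therefore at least $r_0$, the radius of $SEB(T) = B_0$. This makes $B_0$ the smallest ball with $T$ on its boundary, i.e., $CB(T) = B_0$, closing the chain.

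The chief obstacle is the interpolation step for the first equality: one must choose $\lambda$ small enough that the drift $\lambda \|c_1 - c_0\|$ of the center is absorbed by the radius reduction $\lambda (r_0 - r_1)$, and this is precisely where the finiteness of $S \setminus T$ (and the resulting uniform slack $\epsilon$) is essential. Once this interpolation is in place, the remaining pieces follow directly from the definitions of $SEB$ and $CB$.
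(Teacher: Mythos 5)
Your proof is correct and takes essentially the same approach as the paper: both argue by contradiction by sliding the center along the segment from $c_0$ to $c_1$ and showing that for sufficiently small $\lambda > 0$ the interpolated ball still encloses $S$ yet is strictly smaller than $B_0$. The only differences are cosmetic --- you define $r_\lambda$ as the linear interpolation of the two radii and give an explicit admissible bound on $\lambda$, where the paper sets $r_\lambda = \|c_\lambda - z\|$ for a common boundary point $z$ and appeals to continuity; you also spell out the short $SEB(T) = CB(T)$ step that the paper leaves implicit.
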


\begin{proof}

  We shall prove by contradiction. So let $B_0 \neq SEB(T)$. Let $B_1 = SEB(T)$ and hence is strictly smaller than $B_0$. Let $c_0$ and $c_1$ be the centers of $B_0$ and $B_1$ respectively. Clearly $T \subset B_0 \cap B_1$. Moreover, there is some $z \in T$ such that $z \in \partial(B_0) \cap \partial(B_1)$. We define a continuous family $\{B_\lambda \mid 0 \leq \lambda \leq 1 \}$ of $d$-balls as: $c_\lambda= (1-\lambda)c_0 + \lambda c_1$ is the center and $r_\lambda = \|c_\lambda - z\|$ is the radius of $B_\lambda$. Note that $B_0 \cap B_1 \subset B_\lambda$ $\forall$ $\lambda \in [0,1]$ and $B_\lambda$ is strictly smaller than $B_0$ $\forall$ $\lambda \in (0,1]$. Since $S \setminus T$ is contained in the interior of $B_0$, there is an $\epsilon \in (0,1)$ such that $S \setminus T \subset B_\epsilon$. Also $T \subset B_0 \cap B_1 \subset B_\epsilon$. Therefore $S \subset B_\epsilon$. But $B_\epsilon$ is strictly smaller than $B_0$ contradicting the fact that $B_0 = SEB(S)$.  $\Box$

\end{proof}

\begin{figure}[htb]
\centering
\def\svgwidth{0.8\textwidth}
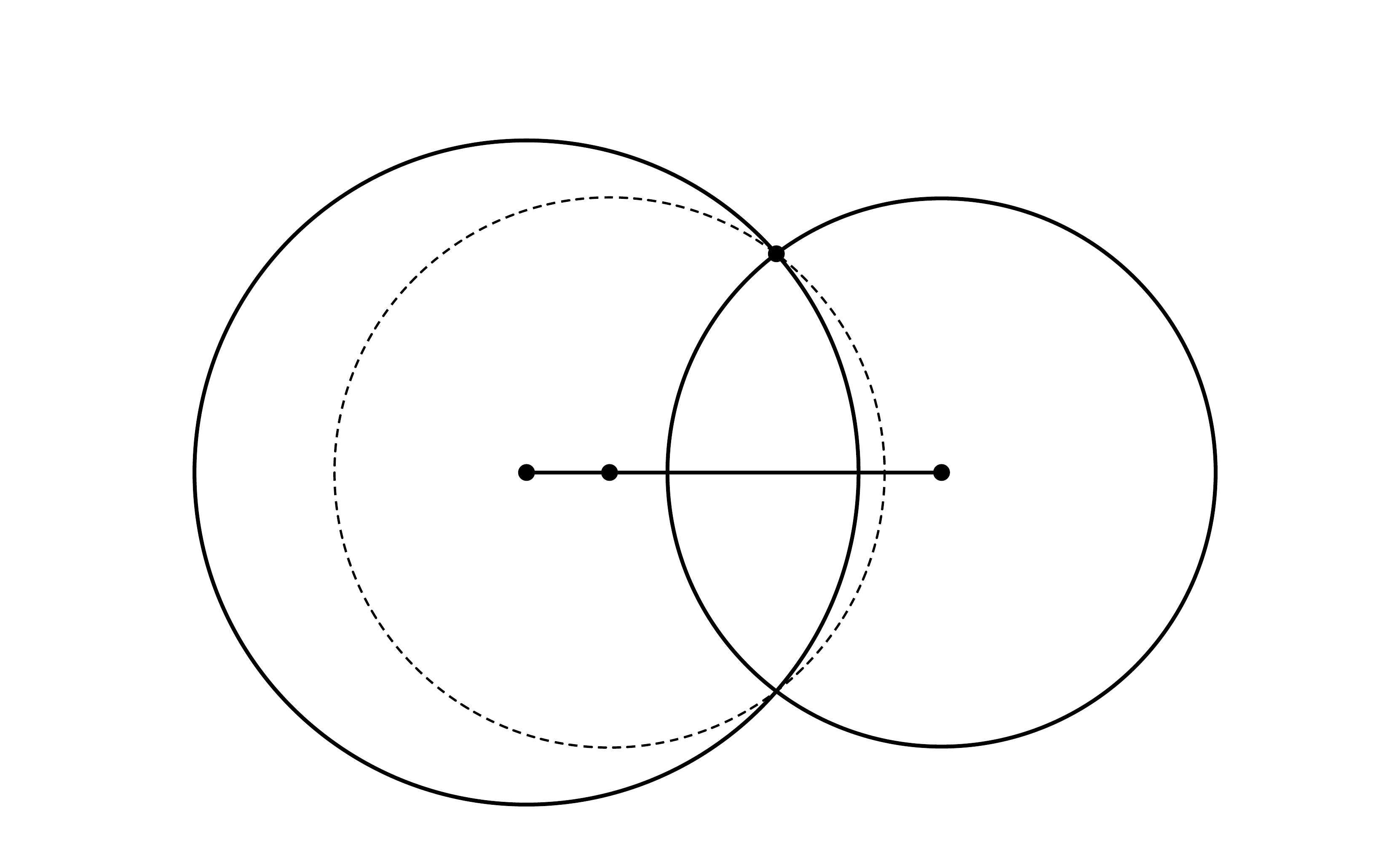
\caption{The continuous deformation of balls as described in lemma \ref{SEBandCB}}
\end{figure}

\begin{lemma}\label{maxaff}
Let $P$ be a set of co-spherical points in $\mathbb{R}^d$. If $P'$ is a maximum affinely independent subset of $P$, then $cc(P) = cc(P')$.
\end{lemma}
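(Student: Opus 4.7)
The plan is to identify the circumcenter $cc(P)$ as the unique point of $\mathrm{Aff}(P)$ that is equidistant from all points of $P$, and then exploit the fact that $\mathrm{Aff}(P') = \mathrm{Aff}(P)$ to conclude that $cc(P') = cc(P)$. Let $k = \dim(\mathrm{Aff}(P))$. Since $P'$ is a maximum affinely independent subset of $P$, we have $|P'| = k+1$ and $\mathrm{Aff}(P') = \mathrm{Aff}(P) =: A$.

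First I would study the set $L \subseteq \mathbb{R}^d$ of all points equidistant from every element of $P$. This is the intersection of the perpendicular bisector hyperplanes of all pairs of points in $P$, and because $P$ is co-spherical the set $L$ is non-empty. A short linear-algebra argument shows that $L$ is an affine flat of dimension $d-k$ orthogonal to $A$, and that $L \cap A$ consists of a single point $c^{*}$. The key geometric step is then the Pythagorean identity: for any $c \in L$ and any $p \in P$,
\[
\|c - p\|^2 = \|c - c^{*}\|^2 + \|c^{*} - p\|^2,
\]
since $c - c^{*}$ is orthogonal to $A$ while $c^{*} - p$ lies in $A$. Consequently the radius of the ball centered at $c$ passing through $P$ is minimized precisely when $c = c^{*}$, which shows $cc(P) = c^{*}$.

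Next I apply the same reasoning to $P'$. As $P' \subseteq P$, the point $c^{*}$ is equidistant from the points of $P'$ as well, and it lies in $\mathrm{Aff}(P') = A$. Since $P'$ is affinely independent and spans $A$, the locus of centers equidistant from $P'$ meets $A$ in exactly one point; this forces that point to be $c^{*}$. The same Pythagorean computation applied to $P'$ gives $cc(P') = c^{*}$, and we conclude $cc(P) = cc(P')$.

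The main obstacle is the clean justification that $L \cap A$ is a single point and that $c^{*}$ minimizes the radius among all admissible centers; once the orthogonal decomposition of $\mathbb{R}^d$ with respect to $A$ is set up, both facts drop out of the Pythagorean identity, so the proof should be short. Everything else is bookkeeping with affine hulls and the observation that restricting to a subset that still spans $A$ does not change the intersection of $A$ with the equidistant locus.
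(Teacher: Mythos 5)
Your proof is correct, and it takes a somewhat different route from the paper's. The paper argues by restriction: identifying $Aff(P)$ with $\mathbb{R}^k$, it asserts that $CB(P)$ in $\mathbb{R}^d$ meets $Aff(P)$ in the circumball of $P$ inside that flat (same center, same radius), reduces without loss of generality to the full-dimensional case $k=d$, and then invokes the uniqueness of the $d$-ball through $d+1$ affinely independent points to get $CB(P)=CB(P')$. You instead work directly in $\mathbb{R}^d$: you describe the equidistant locus $L$ of $P$ as a $(d-k)$-flat whose direction space is the orthogonal complement of that of $A=Aff(P)$, show $L\cap A$ is a single point $c^{*}$, and use the Pythagorean identity $\|c-p\|^2=\|c-c^{*}\|^2+\|c^{*}-p\|^2$ to see that the radius is minimized exactly at $c^{*}$; since $P'$ spans the same flat $A$, its equidistant locus also meets $A$ only at $c^{*}$, giving $cc(P')=c^{*}=cc(P)$. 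What your version buys is a self-contained justification of precisely the step the paper takes for granted --- that the smallest $d$-ball with $P$ on its boundary is centered in $Aff(P)$, i.e., that the circumcenter is the unique equidistant point in the affine hull (the paper only cites this fact, attributing it to Hales, in the surrounding discussion). What the paper's version buys is brevity, by outsourcing the work to the standard uniqueness of the circumscribed sphere of a $k$-simplex. Both hinge on the same underlying observation: the circumcenter depends only on the affine hull and the equidistance condition, and a maximum affinely independent subset preserves the affine hull.
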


\begin{proof}

Let $dim(Aff(P)) = k$. If the points of $P$ are seen as points of $\mathbb{R}^k \cong Aff(P)$, then the circumball of $P$ in $\mathbb{R}^k$ is the $k$-ball $B_k = B_d \cap  Aff(P)$, where $B_d$ is the circumball of $P$ in $\mathbb{R}^d$ with $B_d$ and $B_k$ having coinciding center and equal radius. So without loss of generality we can assume that $dim(Aff(P)) = d$. Then $|P'| = d + 1$. But there is only a unique $d$-ball whose boundary passes through a set of $d + 1$ affinely independent points in $\mathbb{R}^d$. Hence $CB(P) = CB(P')$ and $cc(P) = cc(P')$. $\Box$

%  Let $dim(Aff(P)) = k$. Note that the circumcenter of $P$ is the center of the smallest $k$-ball $B_k = B_d \cap  Aff(P)$ ($B_d$ is a $d$-ball) such that its boundary passes through all the points in $P$ (and then $B_d$ is the circumball of $P$). That is, $B_k$ is the circumball of $P$ when the points of $P$ are seen as points in $\mathbb{R}^k \cong Aff(P)$. So without loss of generality we can assume that $dim(Aff(P)) = d$. Then $|P'| = d + 1$. But there is only a unique $d$-ball whose boundary passes through a set of $d + 1$ affinely independent points in $\mathbb{R}^d$. Hence $CB(P) = CB(P')$ and $cc(P) = cc(P')$.

\end{proof}

  We shall exploit lemma \ref{SEBandCB} and \ref{maxaff} to derive an algebraic formulation of the Euclidean 1-center function. But first, we need to find an expression for the circumcenter of a simplex. Expressions for the circumcenter of a triangle in $\mathbb{R}^2$ or a tetrahedron  in $\mathbb{R}^3$ are well-known \cite{Shewchuk99}. The circumcenter of a $d$-simplex in $\mathbb{R}^d$ can be easily formulated in similar fashion. But a $k$-simplex in $\mathbb{R}^d$ has infinitely many $d$-balls whose boundary passes through its vertices when $k < d$. This is because there are infinitely many points in $\mathbb{R}^d$ that are equidistant from $k+1$ affinely independent points where $k < d$. However the circumcenter of a simplex is the unique point in its affine hull that is equidistant from each of its vertices \cite[p. 157-158]{Hales12}. Using the same approach as employed in \cite{Hales12} the circumcenter of a simplex can be computed by solving a system of linear equations even when the dimension of the ambient space is higher than the dimension of the simplex.

\begin{lemma}\label{cc}
  Let $\tau$ be a $k$-simplex in $\mathbb{R}^d$ with vertices $p_0, p_1, \ldots, p_k$, where $1 \leq k \leq d$. Then its circumcenter $c$ is given by
  \begin{equation}\label{cceqn}
    c = \sum_{j = 1}^{k}\frac{det(M_j)}{det(M)}(p_j - p_0) + p_0
  \end{equation}
   where
  \begin{equation}\label{matrix}
  M =
\begin{pmatrix}
(p_1 - p_0).(p_1 - p_0) &&  (p_1 - p_0).(p_2 - p_0) &  \ldots & (p_1 - p_0).(p_k - p_0)\\
(p_2 - p_0).(p_1 - p_0) &&  (p_2 - p_0).(p_2 - p_0) &  \ldots & (p_2 - p_0).(p_k - p_0)\\
\vdots && \vdots & \ddots & \vdots\\
(p_k - p_0).(p_1 - p_0) &&  (p_k - p_0).(p_2 - p_0) &  \ldots & (p_k - p_0).(p_k - p_0)
\end{pmatrix}
\end{equation}
and $M_j$  is the matrix formed by replacing the $j$th column of $M$ by the column vector
$\begin{pmatrix}\frac{(p_1 - p_0).(p_1 - p_0)}{2} \\ \frac{(p_2 - p_0).(p_2 - p_0)}{2} \\ \vdots \\ \frac{(p_k - p_0).(p_k - p_0)}{2} \end{pmatrix}$

  \end{lemma}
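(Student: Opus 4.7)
The plan is to set up the circumcenter as the unique solution of a small linear system and then read off the stated formula via Cramer's rule.

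First I would invoke the cited fact from Hales \cite{Hales12} that the circumcenter of a $k$-simplex $\tau$ is the unique point in $Aff(\tau)$ equidistant from the vertices $p_0,\dots,p_k$. Since every point of $Aff(\tau)$ can be written as $p_0$ plus a linear combination of the edge vectors emanating from $p_0$, I would parametrize
\[
c \;=\; p_0 + \sum_{j=1}^{k} \lambda_j (p_j - p_0),
\]
for (as yet unknown) scalars $\lambda_1,\dots,\lambda_k \in \mathbb{R}$. The task is then to determine the $\lambda_j$'s from the equidistance conditions $\|c-p_0\|^2 = \|c-p_i\|^2$ for $i=1,\dots,k$.

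Next I would expand each equidistance condition by writing $c-p_i = (c-p_0)-(p_i-p_0)$ and squaring, which cancels the $\|c-p_0\|^2$ term and yields the affine condition
\[
(c-p_0)\cdot(p_i-p_0) \;=\; \tfrac{1}{2}\,(p_i-p_0)\cdot(p_i-p_0), \qquad i=1,\dots,k.
\]
Substituting the parametrization of $c-p_0$ into the left-hand side turns this into a $k\times k$ linear system in $\lambda_1,\dots,\lambda_k$ whose coefficient matrix is precisely the Gram matrix $M$ defined in (\ref{matrix}), with right-hand side equal to the column vector appearing in the statement.

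The remaining step is to invoke Cramer's rule to get $\lambda_j = \det(M_j)/\det(M)$ and substitute back into the parametrization of $c$, which gives exactly (\ref{cceqn}). The one point that needs care — and in my view the only non-routine issue — is ensuring $\det(M)\ne 0$, i.e.\ that the system is nonsingular. Since $p_0,\dots,p_k$ are the vertices of a $k$-simplex, the vectors $p_1-p_0,\dots,p_k-p_0$ are linearly independent in $\mathbb{R}^d$, so $M$ is their Gram matrix and is therefore positive definite, hence invertible. This both justifies the application of Cramer's rule and confirms uniqueness of the circumcenter within $Aff(\tau)$, matching the cited result. Once invertibility is noted, the rest is a direct computation.
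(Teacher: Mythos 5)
Your proposal is correct and follows essentially the same route as the paper: parametrize $c - p_0$ in the basis $p_1-p_0,\dots,p_k-p_0$, reduce the equidistance conditions to the linear system with Gram matrix $M$ and right-hand side $\tfrac{1}{2}(p_i-p_0)\cdot(p_i-p_0)$, note $\det(M)\neq 0$ from linear independence of the edge vectors, and solve by Cramer's rule. The only cosmetic difference is that the paper obtains the key equation $ (c-p_0)\cdot(p_i-p_0) = \tfrac{1}{2}(p_i-p_0)\cdot(p_i-p_0)$ via the orthogonality of $p_i'$ and $c'-\tfrac{p_i'}{2}$ rather than by expanding the squared norms, and argues invertibility via $M = PP^T$ having rank $k$ rather than positive definiteness; these are equivalent.
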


\begin{proof}
    The circumcenter of $\tau$, c, lies in the affine hull of $\{p_0, p_1, \ldots, p_k\}$. Translate the origin of the coordinate system to $p_0$ and write $p_i' = p_i - p_0$.
    Then the affine hull of $\{p_0, p_1, \ldots, p_k\}$ in the previous coordinate system is the linear subspace generated by $\{p_1', \ldots, p_k'\}$ in the new coordinate system.
    Let $c'$ be the circumcenter of $\tau$ in the new coordinate system, i.e., $c' = c - p_0$.
    Then $c'$ lies in the linear subspace generated by $\{p_1', \ldots, p_k'\}$. Since $\{p_0, p_1, \ldots, p_k\}$ is affinely independent, $\{p_1 - p_0, \ldots, p_k - p_0\} = \{p_1', \ldots, p_k'\}$ is linearly independent.
    Therefore, $c'$ can be uniquely written as $c' = \sum_{j = 1}^{k}\lambda_jp_j'$, for some $\lambda_1, \ldots, \lambda_k \in \mathbb{R}$.

    Since $c'$ is equidistant from $p_0' = \textbf{0}, p_1', \ldots, p_k'$, the vectors $p_i'$ and $c' - \frac{p_i'}{2}$ are orthogonal to each other for every $i = 1, 2, \ldots, k$. Hence for all $i = 1, 2, \ldots, k$ we have,
    $$ (c' - \frac{p_i'}{2}).p_i' = 0 $$
    $$\Rightarrow c'.p_i' = \frac{p_i'.p_i'}{2} $$
    $$\Rightarrow (\sum_{j = 1}^{k}\lambda_jp_j').p_i' = \frac{p_i'.p_i'}{2} $$

    Thus we obtain the following system of linear equations.
\begin{equation}\label{cc2}
\begin{pmatrix}
2p_1'.p_1' & 2p_1'.p_2' & \ldots & 2p_1'.p_k'\\
2p_2'.p_1' &  2p_2'.p_2' &  \ldots & 2p_2'.p_k'\\
\vdots & \vdots & \ddots & \vdots\\
2p_k'.p_1' &  2p_k'.p_2' &  \ldots & 2p_k'.p_k'
\end{pmatrix}
\begin{pmatrix}\lambda_1 \\ \lambda_2 \\ \vdots \\ \lambda_k \end{pmatrix} =
\begin{pmatrix}p_1'.p_1' \\ p_2'.p_2' \\ \vdots \\ p_k'.p_k' \end{pmatrix}
\end{equation}

The matrix
$$ M = \begin{pmatrix}
p_1'.p_1' &  p_1'.p_2' &  \ldots & p_1'.p_k'\\
p_2'.p_1' &  p_2'.p_2' &  \ldots & p_2'.p_k'\\
\vdots & \vdots & \ddots & \vdots\\
p_k'.p_1' &  p_k'.p_2' &  \ldots & p_k'.p_k'
\end{pmatrix} $$
can be written as $M = PP^T$, where $P$ is the matrix with row vectors $p_1', p_2', \ldots, p_k'$. As $p_1', p_2', \ldots, p_k'$ are linearly independent, it is easy to see that the rank of $M$ is $k$, and hence $det(M) \neq 0$. Thus the system of linear equations (\ref{cc2}) can be solved for $\lambda$ to get the desired expression. $\Box$

  \end{proof}

  For a set $\{p_0,p_1,\ldots,p_k\}$ of points in $\mathbb{R}^d$, we shall denote by
  $M(p_0,p_1,\ldots,p_k)$ the matrix in eqn. $(\ref{matrix})$ and by $\xi(p_0,p_1,\ldots,p_k)$ the expression in eqn. $(\ref{cceqn})$. It is easy to see that $M(p_0,p_1,\ldots,p_k)$ and $\xi(p_0,p_1,\ldots,p_k)$ are invariant under any permutation of $\{p_0,p_1,\ldots,p_k\}$.

\begin{theorem}\label{Main1}

  For a system of static points $S = \{s_1,\ldots,s_n\}$ in general position and mobile points $V = \{\nu_1,\ldots,\nu_m\}$ each having motion defined by  rational parametric curves $\nu_k:I \rightarrow \mathbb{R}^d$, the Euclidean 1-center function $\varepsilon:I \rightarrow \mathbb{R}^d$ is piecewise differentiable.
\end{theorem}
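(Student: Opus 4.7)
My plan is to apply Lemmas~\ref{SEBandCB}, \ref{maxaff} and \ref{cc} to express $\varepsilon(t)$ piecewise as a rational function of $t$, and then to show that the resulting decomposition of $I$ has only finitely many pieces.

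First, at each $t \in I$ I set $T(t) = (S \cup V(t)) \cap \partial(SEB(S \cup V(t)))$. By Lemma~\ref{SEBandCB}, $SEB(S \cup V(t)) = CB(T(t))$; by Lemma~\ref{maxaff}, $\varepsilon(t) = cc(T'(t))$ for any maximum affinely independent subset $T'(t) \subseteq T(t)$; and by Lemma~\ref{cc} this equals $\xi(T'(t))$. Because the static points are constants and the mobile points follow rational parametric curves, as soon as $T'$ can be taken to be a fixed indexed subset $A \subseteq S \cup V$ throughout an interval, $\varepsilon(t) = \xi(A(t))$ is a rational function of $t$ on that interval, and therefore differentiable wherever its denominator $\det M(A(t))$ (itself a polynomial in $t$) is nonzero.

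The key step is to show that $I$ admits a finite partition on each piece of which a single indexed set $A$ suffices. I will build a candidate event set as follows. For every pair $(A, p)$ with $A \subseteq S \cup V$, $2 \leq |A| \leq d+1$, and $p \in (S \cup V) \setminus A$, define
\[
f_{A,p}(t) = \|p(t) - \xi(A(t))\|^2 - \|a(t) - \xi(A(t))\|^2, \qquad a \in A \text{ arbitrary,}
\]
which vanishes precisely when $p(t)$ lies on $CB(A(t))$. If $f_{A,p}$ is not identically zero, it has finitely many zeros in $I$; together with the finite zero sets of each $\det M(A(t))$, these generate a finite exceptional set $E \subset I$. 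My next task is to argue that on each connected component of $I \setminus E$ the combinatorial type $T(t)$ is constant: by continuity of the center and radius of the smallest enclosing ball together with the uniqueness asserted in Lemma~\ref{SEBandCB}, any change in $T(t)$ at some $t^* \in I$ must coincide with a point entering or leaving $\partial(SEB)$, which in turn forces one of the $f_{A,p}$ to vanish at $t^*$.

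The main obstacle I anticipate is making this last argument airtight. One has to carefully handle the degenerate case $f_{A,p} \equiv 0$ (which corresponds to $p$ being perpetually co-spherical with $A$, and hence absorbable into a larger indexed candidate set), and to confirm that no sequence of combinatorial transitions can accumulate inside $I$. Once these are settled, the conclusion is immediate: on each of the finitely many subintervals determined by $E$, $\varepsilon(t)$ is given by the rational formula $\xi(A(t))$ from Lemma~\ref{cc} for a fixed $A$, and is therefore differentiable, which is exactly the piecewise differentiability claimed.
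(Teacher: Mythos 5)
Your proposal is correct and follows essentially the same route as the paper: decompose $I$ into finitely many subintervals on each of which the set of points of $S \cup V(t)$ lying on $\partial(SEB(S \cup V(t)))$ is fixed, then invoke Lemmas~\ref{SEBandCB}, \ref{maxaff} and \ref{cc} to write $\varepsilon(t) = \xi(A(t))$ as a rational, hence differentiable, function on each piece, with the degenerate (affinely dependent) case handled by passing to a maximum affinely independent subset. Your explicit construction of the event set from the zeros of the rational functions $f_{A,p}$ and of $\det M(A(t))$ in fact supplies the finiteness/non-accumulation justification that the paper's proof leaves implicit when it simply posits the interval decomposition.
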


\begin{proof}

% $\varepsilon:I \rightarrow \mathbb{R}^d$ is continuous by \cite{Durocher06}.
We decompose $I$ into closed intervals $I_1, I_2, \ldots$ with $I = \bigcup_k I_k$ and $|I_k \cap I_{k + 1}| = 1$ so that for all $t \in {I_k}^o$, $\partial(B(t))$ contains a fixed set of points of $S \cup V$. (${I}^o$ denotes the interior of $I$.) We call the end points of the intervals \emph{event points}.

\begin{description}
 \item[Case 1]

 Let $\nu_j(t) \notin \partial(B(t))$ for all $t \in {I_k}^o$ and $j=1, \ldots, m$. Then $\varepsilon(t)$ is a constant function, and hence differentiable in ${I_i}^o$.

\end{description}

\begin{description}
 \item [Case 2]

 Suppose that for $t \in {I_k}^o$, static points $s_1,\ldots, s_i$ and mobile points $\nu_1,\ldots,\nu_j$ are on the boundary of $B(t)$. Furthermore, assume that the points $s_1,\ldots, s_i,$ $\nu_1(t),\ldots,\nu_j(t)$ remain affinely independent for each $t \in {I_k}^o$. Then $det(M(s_1,\ldots, s_i,\nu_1(t),\ldots,\nu_j(t)))$ is non-vanishing in ${I_k}^o$. By lemma \ref{SEBandCB} for each $t \in {I_i}^o$, $\varepsilon(t)$ is the circumcenter of the simplex with vertices $\{s_1,\ldots, s_i,\nu_1(t),\ldots,\nu_j(t)\}$. So, $\varepsilon(t)= \xi(s_1,\ldots, s_i,\nu_1(t),\ldots,\nu_j(t))$ when $t \in {I_i}^o$, and hence differentiable since it is a rational function.

\end{description}

\begin{description}
 \item [Case 3]

Suppose that for $t \in {I_k}^o$, static points $s_1,\ldots, s_i$ and mobile points $\nu_1,\ldots,\nu_j$ are on the boundary of $B(t)$, but they do not remain affinely independent. But since $det(M(s_1,\ldots, s_i,\nu_1(t),\ldots,\nu_j(t)))$ is a rational function in $t$, either it is identically zero or it has only finitely many roots. If it has finitely many roots in ${I_k}^o$, then clearly the 1-center function is piecewise differentiable in ${I_k}^o$. If $det(M(s_1,\ldots, s_i,\nu_1(t),\ldots,\nu_j(t)))$ is identically zero, then by lemma \ref{maxaff} we have to take a maximum subset of $\{s_1,\ldots, s_i, \nu_1,\ldots,\nu_j\}$ that remain affinely independent and the situation reduces to the previous cases.
$\Box$
\end{description}

\end{proof}

The proof of theorem \ref{Main1} leads to the following result, which provides a complete characterization of the Euclidean 1-center function.

\begin{theorem}\label{Main2}
  For a system of static points $S = \{s_1,\ldots,s_n\}$ in general position and mobile points $V = \{\nu_1,\ldots,\nu_m\}$ each having motion defined by  rational parametric curves $\nu_k:I \rightarrow \mathbb{R}^d$, the 1-center function $\varepsilon:I \rightarrow \mathbb{R}^d$ is given by $\varepsilon(t) = \xi(P(t))$, where $P(t)$ is a maximum affinely independent subset of $(S \cup V(t)) \bigcap \partial(SEB(S \cup V(t)))$.
\end{theorem}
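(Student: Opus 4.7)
The plan is to obtain the stated formula by stringing together Lemmas \ref{SEBandCB}, \ref{maxaff}, and \ref{cc}, treating the theorem essentially as a pointwise-in-$t$ statement. Fix $t \in I$ and set $T(t) = (S \cup V(t)) \cap \partial(SEB(S \cup V(t)))$. Since the smallest enclosing ball of any finite point set must be touched in at least two points (otherwise it could be shrunk), $|T(t)| \geq 2$, so Lemma \ref{SEBandCB} applies to the finite set $S \cup V(t)$ and yields
\begin{equation*}
SEB(S \cup V(t)) = SEB(T(t)) = CB(T(t)),
\end{equation*}
whence $\varepsilon(t) = cc(T(t))$.

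Next, every element of $T(t)$ lies on the sphere $\partial(SEB(S \cup V(t)))$, so $T(t)$ is co-spherical. Letting $P(t)$ be any maximum affinely independent subset of $T(t)$, Lemma \ref{maxaff} gives $cc(T(t)) = cc(P(t))$. Finally, Lemma \ref{cc} provides the closed-form expression $cc(P(t)) = \xi(P(t))$, and chaining the three equalities delivers $\varepsilon(t) = \xi(P(t))$, which is exactly the claim.

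The only step requiring any care is the first one, namely verifying $|T(t)| \geq 2$ so that Lemma \ref{SEBandCB} may be invoked. After that the argument is a clean composition; even the apparent ambiguity in the choice of $P(t)$ is harmless, because Lemma \ref{maxaff} guarantees that all maximum affinely independent subsets of $T(t)$ share a common circumcenter, and the permutation invariance of $\xi$ noted after Lemma \ref{cc} ensures that $\xi(P(t))$ is well defined irrespective of the ordering chosen. So there is no genuine obstacle: the substantive work has already been done in the preceding lemmas, and this theorem merely assembles them into the desired algebraic description.
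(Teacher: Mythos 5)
Your proof is correct and follows essentially the same route as the paper: the paper gives no standalone proof of Theorem \ref{Main2}, stating only that it "leads from" the proof of Theorem \ref{Main1}, and that proof invokes exactly the same chain — Lemma \ref{SEBandCB} to identify $\varepsilon(t)$ with the circumcenter of the boundary points, Lemma \ref{maxaff} to pass to a maximum affinely independent subset, and Lemma \ref{cc} for the closed form $\xi$. Your pointwise-in-$t$ assembly, together with the observations that $|T(t)|\geq 2$ and that the choice of $P(t)$ and its ordering are immaterial, simply makes explicit what the paper leaves implicit.
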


 \begin{remark}
In theorem \ref{Main1} we have shown that the center function $\varepsilon(t)$ is piecewise differentiable and have characterized the points, namely the event points, where $\varepsilon(t)$ may fail to be differentiable. The following are an instance where the center function is not differentiable at an event point, and another instance where it is differentiable at an event point. In fact, the center functions of both the examples have the same trace in $\mathbb{R}^2$, yet have different differentiability properties.

%\begin{figure}
%  \centering
%  \includegraphics[width=0.8\textwidth]{Example.png}
%  \caption{Trace of the mobile point and the center function in example 1 and 2}\label{Example}
%\end{figure}

\begin{figure}[!htb]
\centering
\includegraphics[width=0.8\textwidth]{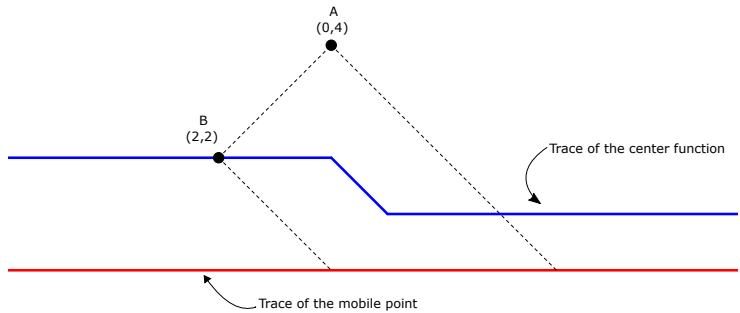}
\caption{Trace of the mobile point and the center function in example 1 and 2}
\label{Example03}
\end{figure}

\begin{example}\label{eg1}

We consider an example in $\mathbb{R}^2$ with a set of two static points and one mobile point. Let $A (0,4)$ and $B(-2,2)$ be the two static points and let the mobile point $\nu$ whose motion is given by the parametric curve $\nu: \mathbb{R} \rightarrow \mathbb{R}^2$ given by $\nu(t) = (t,0)$. Then the center function $\varepsilon(t)$ is given by
\begin{equation*}
  \varepsilon(t) = \left\{
                     \begin{array}{ll}
                       (\frac{t}{2},2), & \hbox{$t \in (-\infty,0]$;} \\
                       (\frac{t^2}{2(t+4)},\frac{-t^2+4t+16}{2(t+4)}), & \hbox{$t \in [0,4]$;} \\
                       (\frac{t-2}{2},1), & \hbox{$t \in [4,\infty)$.}
                     \end{array}
                   \right.
\end{equation*}

We check for differentiability at the event point $t = 0$.
$$\lim_{h \rightarrow 0, h>0} \frac{\varepsilon(h)-\varepsilon(0)}{h} = (0,0)$$
$$\lim_{h \rightarrow 0, h>0} \frac{\varepsilon(-h)-\varepsilon(0)}{-h} = (\frac{1}{2},0)$$
Hence $\varepsilon(t)$ is not differentiable at $t=0$.

\end{example}

\begin{example}\label{eg2}
We have the same set of static points as in Example \ref{eg1}. Here suppose that the motion of the mobile point is given by the parametric curve $\nu: \mathbb{R} \rightarrow \mathbb{R}^2$ given by $\nu(t) = (t^3,0)$. In this case the center function is
\begin{equation*}
  \varepsilon(t) = \left\{
                     \begin{array}{ll}
                       (\frac{t^3}{2},2), & \hbox{$t \in (-\infty,0]$;} \\
                       (\frac{t^6}{2(t^3+4)},\frac{-t^6+4t^3+16}{2(t^3+4)}), & \hbox{$t \in [0,2^{2/3}]$;} \\
                       (\frac{t^3-2}{2},1), & \hbox{$t \in [2^{2/3},\infty)$.}
                     \end{array}
                   \right.
\end{equation*}
Again $t = 0$ is an event point.
$$\lim_{h \rightarrow 0, h>0} \frac{\varepsilon(h)-\varepsilon(0)}{h} = (0,0)$$
$$\lim_{h \rightarrow 0, h>0} \frac{\varepsilon(-h)-\varepsilon(0)}{-h} = (0,0)$$
Hence in this case $\varepsilon(t)$ is differentiable at $t=0$.
\end{example}

\end{remark}

\section{Computation of the Euclidean 1-center function}\label{algo}

%\begin{lemma} [\cite{Fisch03}] \label{Alg0}
%Let $T \subset \mathbb{R}^d$ be a finite set of points on the boundary of some ball $B$
%with center $c$. Then $B = SEB(T)$ if and only if $c \in conv(T)$
%\end{lemma}

In theorem \ref{Main2} we have derived an exact algebraic description of the Euclidean 1-center function. The Euclidean 1-center at any time $t \in T$ is equal to the circumcenter of the points on the boundary of smallest enclosing ball at $t$. So in order to compute the Euclidean 1-center function, we need the information about the points that appear on the boundary of smallest enclosing ball at any instant throughout the parameter interval. The following result provides a necessary and sufficient condition for the input points to lie on the boundary of smallest enclosing ball.

\begin{lemma} \label{Alg1}

For $T \subseteq S \subset \mathbb{R}^d$, $CB(T) = SEB(S)$ if and only if $S \subset CB(T)$ and $cc(T) \in conv(T)$.
\end{lemma}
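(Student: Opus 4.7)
The plan is to prove the biconditional by invoking the standard support characterization of the smallest enclosing ball: a $d$-ball $B$ equals $SEB(S)$ iff $S \subset B$ and the center of $B$ lies in $conv(S \cap \partial B)$. This strengthens Lemma \ref{SEBandCB}, which gives $SEB(S) = CB(S \cap \partial SEB(S))$ but does not explicitly record the convex-hull location of the center. I would either cite this from the smallest-enclosing-ball literature or prove it in a separate lemma, using essentially the continuous-deformation argument of Lemma \ref{SEBandCB} in both directions: if the center of a minimal enclosing ball were outside the convex hull of its support points, a slight translation of the center toward that hull yields a strictly smaller enclosing ball; conversely, if the center is in the hull, expanding $\|p - c'\|^2$ for any hypothetical smaller center $c'$ and averaging the resulting inequalities with the convex coefficients of $c$ forces $\|c - c'\|^2 < 0$, a contradiction.

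For the direction ($\Leftarrow$), assume $S \subset CB(T)$ and $cc(T) \in conv(T)$. By definition of the circumball, $T \subseteq \partial CB(T)$, and since $T \subseteq S$, this gives $T \subseteq S \cap \partial CB(T)$, whence $conv(T) \subseteq conv(S \cap \partial CB(T))$. Thus $cc(T)$, the center of $CB(T)$, lies in $conv(S \cap \partial CB(T))$, and the support characterization immediately yields $CB(T) = SEB(S)$.

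For the direction ($\Rightarrow$), assume $CB(T) = SEB(S)$. The inclusion $S \subset SEB(S) = CB(T)$ is immediate. Set $T^* := S \cap \partial SEB(S)$; since $T \subseteq S$ and $T \subseteq \partial CB(T) = \partial SEB(S)$, one has $T \subseteq T^*$, and the general-position hypothesis on $S$ forces $T^*$ to be affinely independent (no affinely dependent subset of $S$ can lie on the boundary of a $d$-ball). The support characterization applied to $SEB(S)$ gives $cc(T) \in conv(T^*)$, while Lemma \ref{cc} places $cc(T)$ in $Aff(T)$. It then suffices to verify the identity
\[
conv(T^*) \cap Aff(T) = conv(T),
\]
whose non-trivial inclusion follows from uniqueness of affine coordinates with respect to an affinely independent set: any $x$ on the left-hand side admits both a convex combination over $T^*$ and an affine combination over $T$, and extending the latter by zero-coefficients on $T^* \setminus T$ yields two affine combinations of $x$ over the affinely independent set $T^*$, which must therefore coincide; hence the coefficients of points in $T^* \setminus T$ vanish and those in $T$ are nonnegative with sum one. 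This gives $cc(T) \in conv(T)$.

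The main obstacle is clearly the missing support characterization of $SEB(S)$: the full biconditional form is what both directions of Lemma \ref{Alg1} really turn on, and the paper has only proved its weaker consequence in Lemma \ref{SEBandCB}. Once that characterization is in hand, the argument reduces to the elementary observation that slicing a simplex by the affine hull of a face recovers exactly that face.
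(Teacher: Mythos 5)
Your proof is correct, and its backbone --- the support criterion that a ball with a point set on its boundary is that set's smallest enclosing ball if and only if its center lies in the set's convex hull --- is exactly the ingredient the paper uses: its proof of Lemma \ref{Alg1} opens by quoting precisely this criterion from \cite{Fischer03}, so the ``missing characterization'' you identify as the main obstacle is imported rather than absent, and your plan to re-derive it, while sound, is unnecessary. Your ($\Leftarrow$) direction then coincides with the paper's up to packaging. The ($\Rightarrow$) direction is where you genuinely diverge: the paper disposes of it in one line, asserting that $CB(T) = SEB(S)$ gives $SEB(T) = CB(T)$ ``by Lemma \ref{SEBandCB}'' and then applying the criterion --- but Lemma \ref{SEBandCB} only delivers this when $T$ is the full boundary set $S \cap \partial(SEB(S))$, and says nothing directly about a proper subset $T$ of it (e.g.\ the two endpoints of the hypotenuse of a right triangle, whose circumball is already $SEB$ of all three vertices). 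Your detour through $T^* = S \cap \partial(SEB(S))$, followed by the slicing identity $conv(T^*) \cap Aff(T) = conv(T)$ for affinely independent $T^*$, closes exactly this gap, so on this point your argument is more careful than the paper's. The price is the appeal to general position to get affine independence of $T^*$; that hypothesis is not stated in the lemma itself, but it (or something like it) is genuinely needed --- for four concyclic points in the plane with the center in their convex hull and $T$ three of them forming an obtuse triangle, one has $CB(T) = SEB(S)$ yet $cc(T) \notin conv(T)$ --- so your restriction is not a defect but a condition the paper leaves implicit.
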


\newproof{proof5}{Proof}
\begin{proof}
  We have the following result from \cite{Fischer03}: for a set of points $T$ on the boundary of some ball $B$ with center $c$, $B = SEB(T)$ if and only if $c \in conv(T)$.

  Let $T \subseteq S$. If $SEB(S) = CB(T) = B$, then $B = SEB(T)$ by lemma \ref{SEBandCB}. Hence, $cc(T) \in conv(T)$. Conversely, let $S \subset CB(T)$ and $cc(T) \in conv(T)$. Hence $CB(T) = SEB(T)$. Since $T \subseteq S$ and $S \subset CB(T)$, we have $CB(T) = SEB(S)$. $\Box$
\end{proof}

Since the Euclidean 1-center at any time instant is determined by the points on the boundary of the $SEB$, a change in the algebraic description of the 1-center function is caused by some combinatorial change of the points on the boundary of the deforming $SEB$, i.e., some points appear on or leave the boundary of the $SEB$. Now if the location of the static points and the equation of the motion of the mobile points are known, then we have the equation of all the possible \emph{arcs} or curve pieces of the Euclidean 1-center function, using the formula in lemma \ref{cc}. Since the Euclidean 1-center function is continuous \cite{Durocher06}, it can go from one such curve to another only at an intersection point. Our algorithm is based on this basic idea. In short we shall iteratively compute the equation of the Euclidean 1-center function as: in each iteration given an arc of the 1-center function we compute its intersection with all the candidates for the subsequent arc and decide the next arc.

We shall first describe an algorithm in detail for the case where there is only a single mobile point and then modify it for the case of multiple mobile points.

%For the sake of clarity we shall first describe the algorithm in detail for the case where there is only a single mobile point in section \ref{algo1}. The algorithm naturally extends to the case of multiple mobile points subject to some technical alterations that we discuss in section \ref{algo2}.
%\begin{lemma} \label{Alg2}
%Let $S \subset \mathbb{R}^d$ be a set of points in general position and $T \subseteq S$. Then $CB(T) = SEB(S)$ if and only if the $T$ defines a farthest point Delaunay simplex which has a non-empty intersection with the corresponding farthest point Voronoi face.
%\end{lemma}
%
%\begin{proof}
%
%   Let $CB(T) = SEB(S)$. Then $S \subset CB(T)$ and $cc(T) \in conv(T)$, by Lemma \ref{Alg1}. Since $S \subset CB(T)$, $cc(T)$ is a point in the farthest point Voronoi face corresponding $T$. Hence $T$ defines a non-empty farthest point Voronoi face of $S$, which implies that $conv(T)$ is a farthest point Delaunay simplex of $S$. So $cc(T)$ lies in the intersection of the farthest point Delaunay simplex and the farthest point Voronoi face defined by $T$.
%
%   For the converse, let $c$ be a point lying in the intersection of the farthest point Delaunay simplex and the farthest point Voronoi face defined by $T$. Since $c$ is in the farthest point Voronoi face corresponding to $T$, there is a ball $B$ such that $T \subset \partial(B)$ and $S \subset B$. Since $c \in conv(T) \subset aff(T)$, $c = cc(T)$. Hence by Lemma \ref{Alg1}, $CB(T) = SEB(S)$. $\Box$
%\end{proof}

\subsection{Outline of the algorithm in case of a single mobile point}\label{algo1}

Consider a set of $n$ static points $S \subset \mathbb{R}^d$ and a single mobile point $\nu$ whose equation of the motion is given by $\nu:[0,T] \rightarrow \mathbb{R}^d$. Suppose that at some instant, the equation of the 1-center function is given by $\xi(S', \nu(t))$, where $S' \subset S$. Then the events that can cause a change of the equation of the 1-center function at some $t_0 \in I$ can be characterized as the following:

\begin{enumerate}
  \item A set of static points $S'' \subset S$ appears on the boundary of the deforming $SEB$ at $t_0 \in I$. In that case the curve $\xi(S', \nu(t))$ intersects the farthest point Voronoi face defined by $S' \cup S''$ at $t_0$.

  \item A set of static points $S'' \subset S'$ leaves the boundary of the $SEB$ at $t_0 \in I$. Then there is an intersection between $\xi(S', \nu(t))$ and $\xi(S' \setminus S'', \nu(t))$ at $t_0$.

  \item The mobile point leaves (when it enters $SEB(S)$) or reappears (when it exits $SEB(S)$) on the boundary of the deforming $SEB$ at $t_0 \in I$.
\end{enumerate}

Throughout the remainder of the paper by an \emph{intersection point} or simply an \emph{intersection} between two parametric curves $\varphi_1:I \rightarrow \mathbb{R}^d$ and $\varphi_2:I \rightarrow \mathbb{R}^d$ we shall mean a parameter value $t_0$ such that $\varphi_1(t_0) = \varphi_2(t_0)$. Similarly an intersection point between a parametric curve $\varphi:I \rightarrow \mathbb{R}^d$ and some geometric object given in implicit form by $f(x_1,x_2,\ldots,x_d) = 0$ will refer to a parameter value $t_0$ such that $f(\varphi_1(t_0)) = 0$. The combinatorial changes of the points on the boundary of the deforming $SEB$ that can change the equation of the 1-center function can only occur at the intersection points characterized above. The parameter value $t$ corresponding to these events will be called an \emph{event point}. The algebraic curve between two such consecutive event points will be referred to as an \emph{arc} of the Euclidean 1-center function. The set of points on the boundary of the smallest enclosing ball that defines an arc will be referred to as the \emph{support set} of that arc.

Roughly speaking, the basic idea of our algorithm is that in each step with the present arc of the 1-center known, we find the next event point by computing the aforesaid intersections and determine the subsequent arc. The program terminates when no such event point is found in $[0, T]$. Note that since the mobile point is moving along an arbitrary curve, the behaviour of the 1-center at an event point would entirely depend upon the local properties of the curve, namely the direction of the mobile point at that point. Figure \ref{epsilon} illustrates an instance where small perturbations of the mobile point in different directions lead to different outcomes. To resolve this ambiguity, at these intersection points or event points we check which of the points remain on the boundary of the $SEB$ upon a small $\epsilon$ perturbation.

Finally we briefly sketch an outline of our proposed algorithm. Details regarding the implementation of certain steps of the algorithm is given in the next section. Suppose that we are given as inputs the set $S = \{s_1,s_2,\ldots,s_n\}$ of $n$ static points and the equation of the motion $\nu:[0,T] \rightarrow \mathbb{R}^d$ of the mobile point $\nu$. Assume that the farthest-point Voronoi diagram of $S$ is given or already computed. Software packages like Qhull \cite{Qhull} can compute farthest-point Voronoi diagrams in higher dimensions. Also assume that we are given the smallest enclosing ball of $S$ and $S \cup \{\nu(0)\}$. Hence for initialization, the first arc of the 1-center function is known from $SEB(S \cup \{\nu(0)\})$. So instead of the information of $SEB(S \cup \{\nu(0)\})$, we can simply assume that the equation of the first arc of the 1-center function and its support set is given.

\begin{description}
  \item[STEP 1.] Compute the sorted (in increasing order) list $L$ of intersections of $\nu(t)$ with $\partial(SEB(S))$. Label the intersections as `IN' or `OUT' depending on whether the mobile point enters or exits the $SEB(S)$ respectively. \\

  \item[STEP 2.] Insert $0$ into the empty list $E$ of `event points'. Label $0$ as `IN' if $\nu(0)$ is in the interior of $SEB(S)$. \\

  \item[STEP 3.] \textbf{If} the last `event point', say $t_{e_{last}}$, is an `IN'-point, the next event point is the next `OUT'-point in $L$.

  \textbf{Else}, the present arc of the center function is $\xi(S', \nu(t))$ for some $S' \subset S$. Then

  \begin{enumerate}[label=(\alph*)]
    \item for each nonempty $S'' \subset S'$, compute intersections of $\xi(S'', \nu(t))$ and $\xi(S', \nu(t))$.
    \item compute intersections of $\xi(S', \nu(t))$ and the farthest-point Voronoi faces involving $S'$, i.e., $\mathcal{FPVD}(P)$ with $S' \subset P$.
  \end{enumerate}

  %Make a sorted list $L_2$ of the computed intersections in $[t_e, T]$. The `candidate' $t_c$ for the next event point is $L_2[1]$ or an `IN'-point in $L_1$ which ever is earlier in $[t_e, T]$.\\

  The next `event point', say $t_{e_{new}}$, is one of these intersections or an `IN'-point in $L$ which ever is earliest in $(t_{e_{last}}, T]$.\\

  \item[STEP 4.] \textbf{If} no `event point' is found, \textbf{END} the program.

  \textbf{Else if} the `event point' $t_{e_{new}}$ is an `IN'-point, the next arc is the constant function whose value is the center of $SEB(S)$. Insert $t_{e_{new}}$ into $E$. \textbf{Go to} Step 3.

  \textbf{Else}, find the `support set' of $SEB(S \cup \{\nu(t_{e_{new}} + \epsilon)\})$, for a predefined sufficiently small $\epsilon > 0$, using lemma \ref{Alg1}. Compute the new arc using the formula in lemma \ref{cc}. Insert $t_{e_{new}}$ into $E$. \textbf{Go to} STEP 3.

\end{description}

%next arc using Lemma \ref{Alg1} by finding which points that lie on the boundary of $SEB(S \cup \{\nu(t + \epsilon)\})$, for some predefined sufficiently small $\epsilon > 0$. If there is a change of arc insert $t$ into $E$. Go to Step 4.

\begin{figure}[!htb]
\centering
\includegraphics[width=0.45\textwidth]{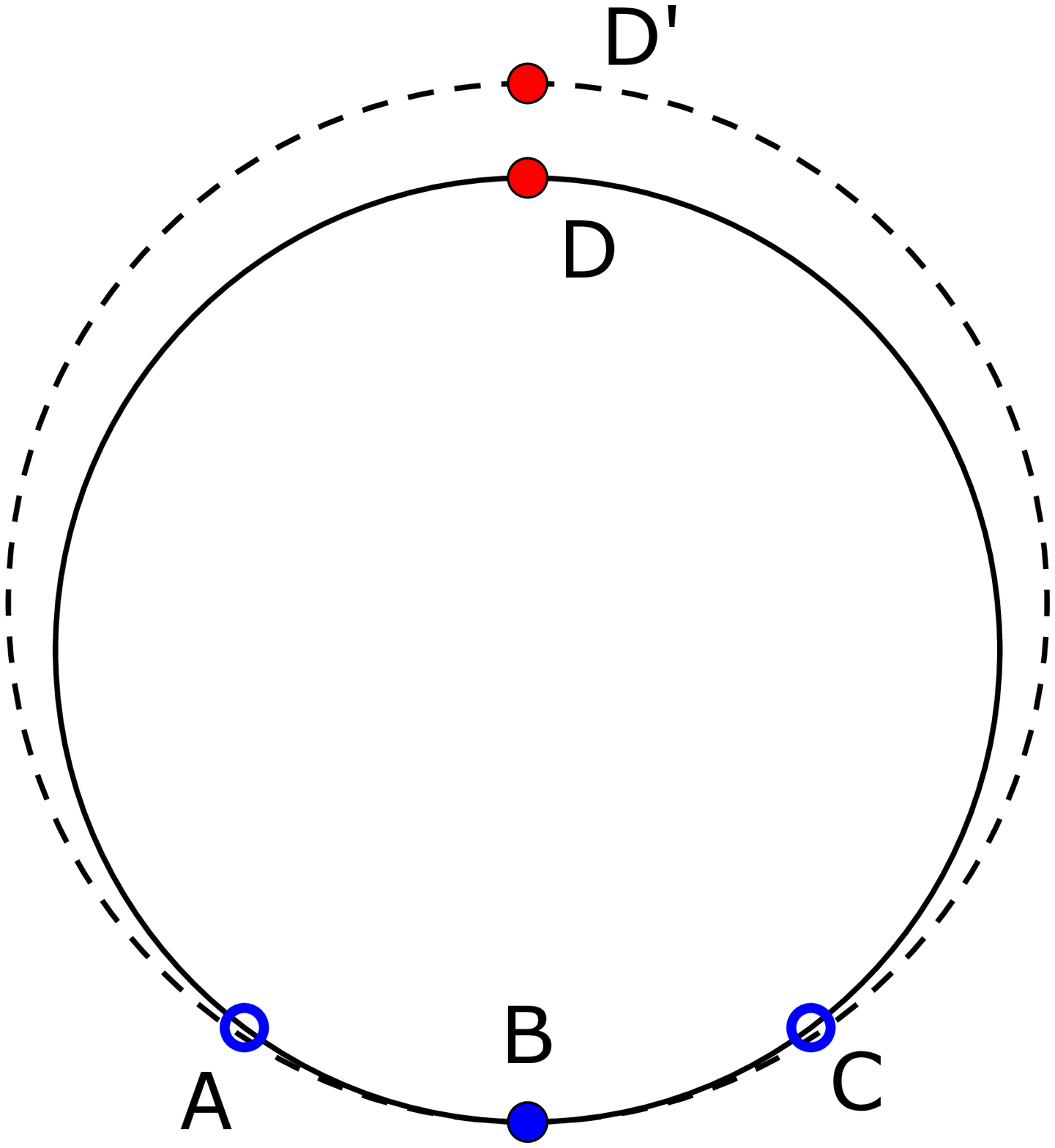}
\hfill
\includegraphics[width=0.45\textwidth]{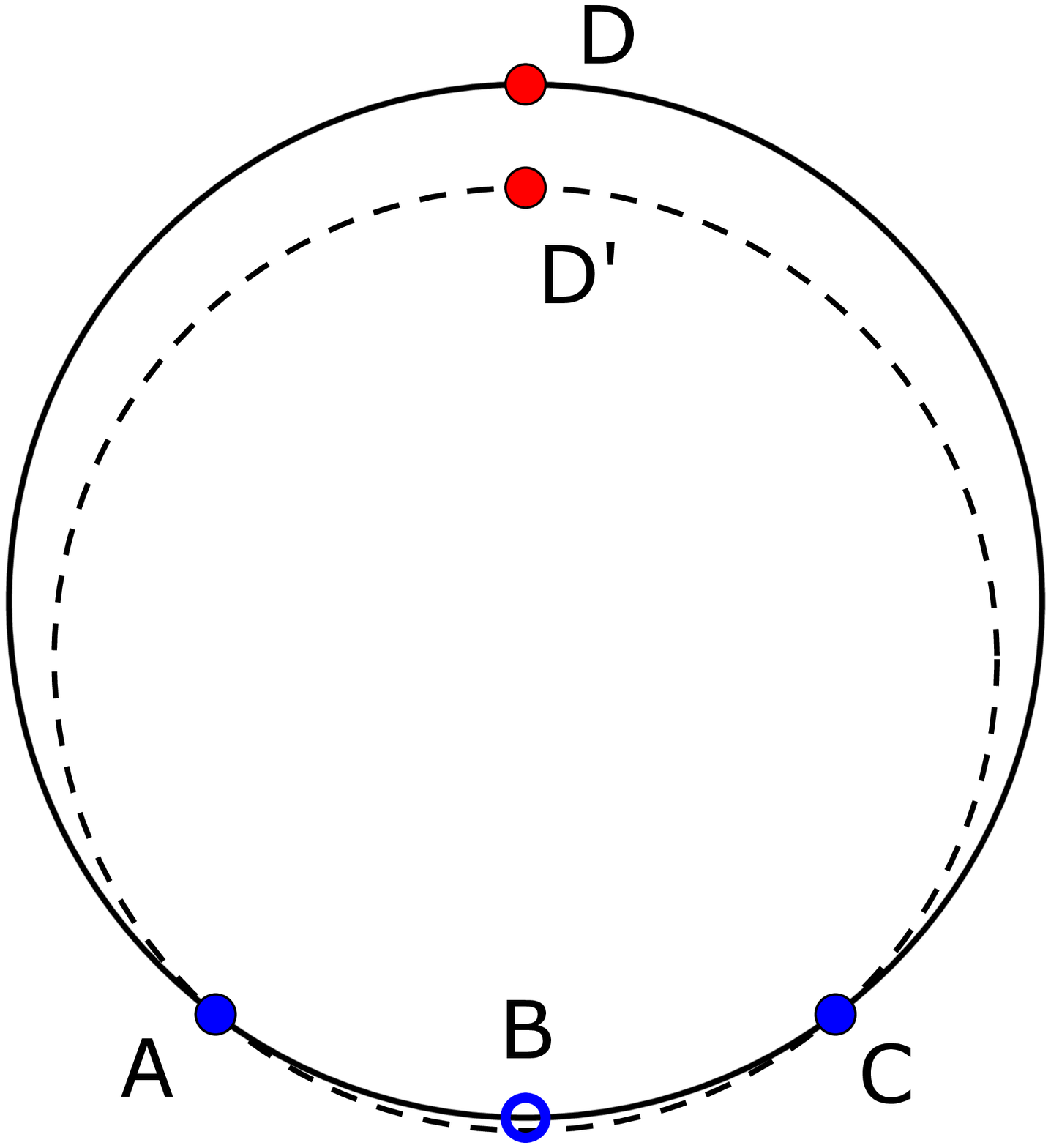}\\
\includegraphics[width=0.45\textwidth]{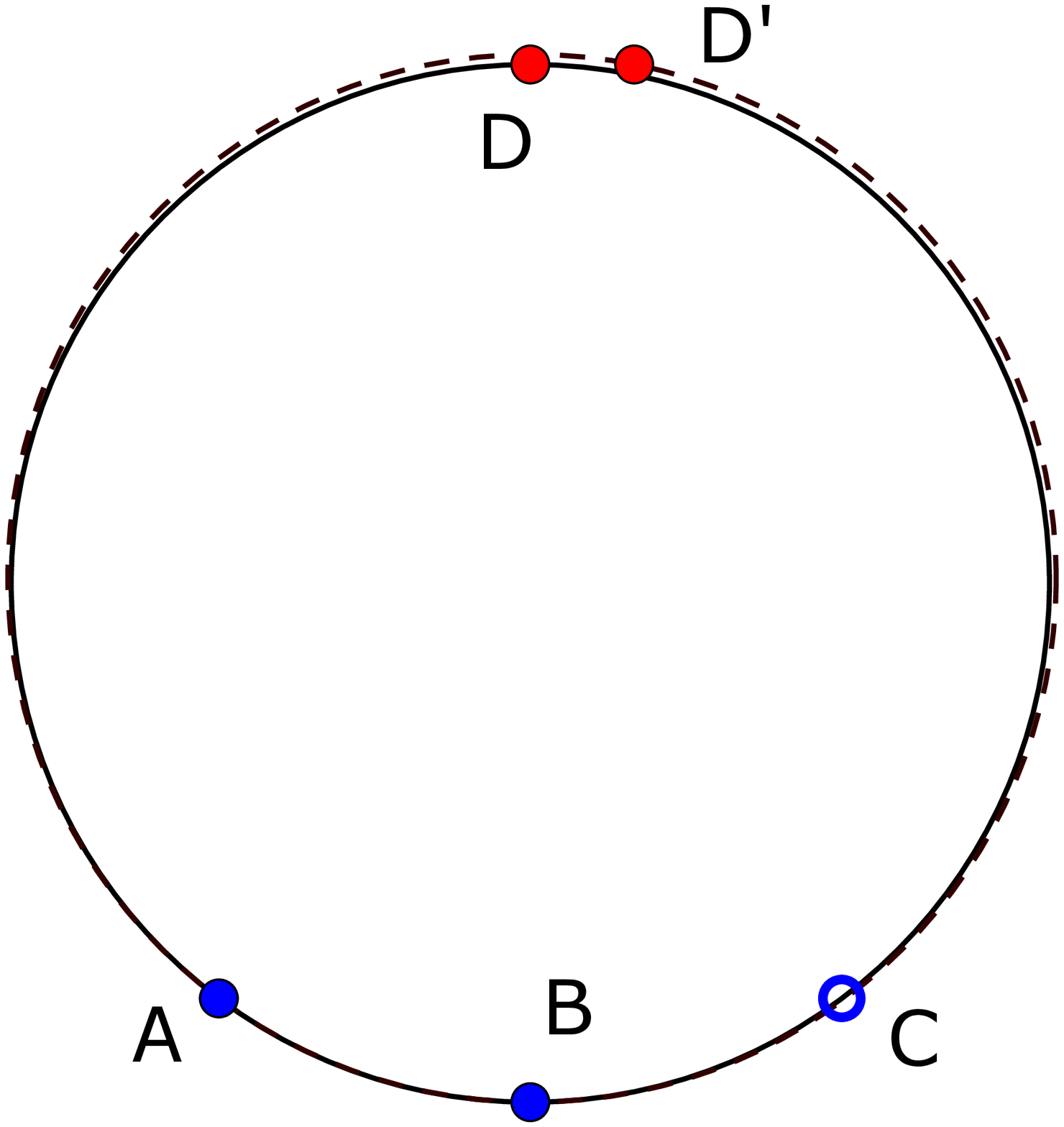}
\hfill
\includegraphics[width=0.45\textwidth]{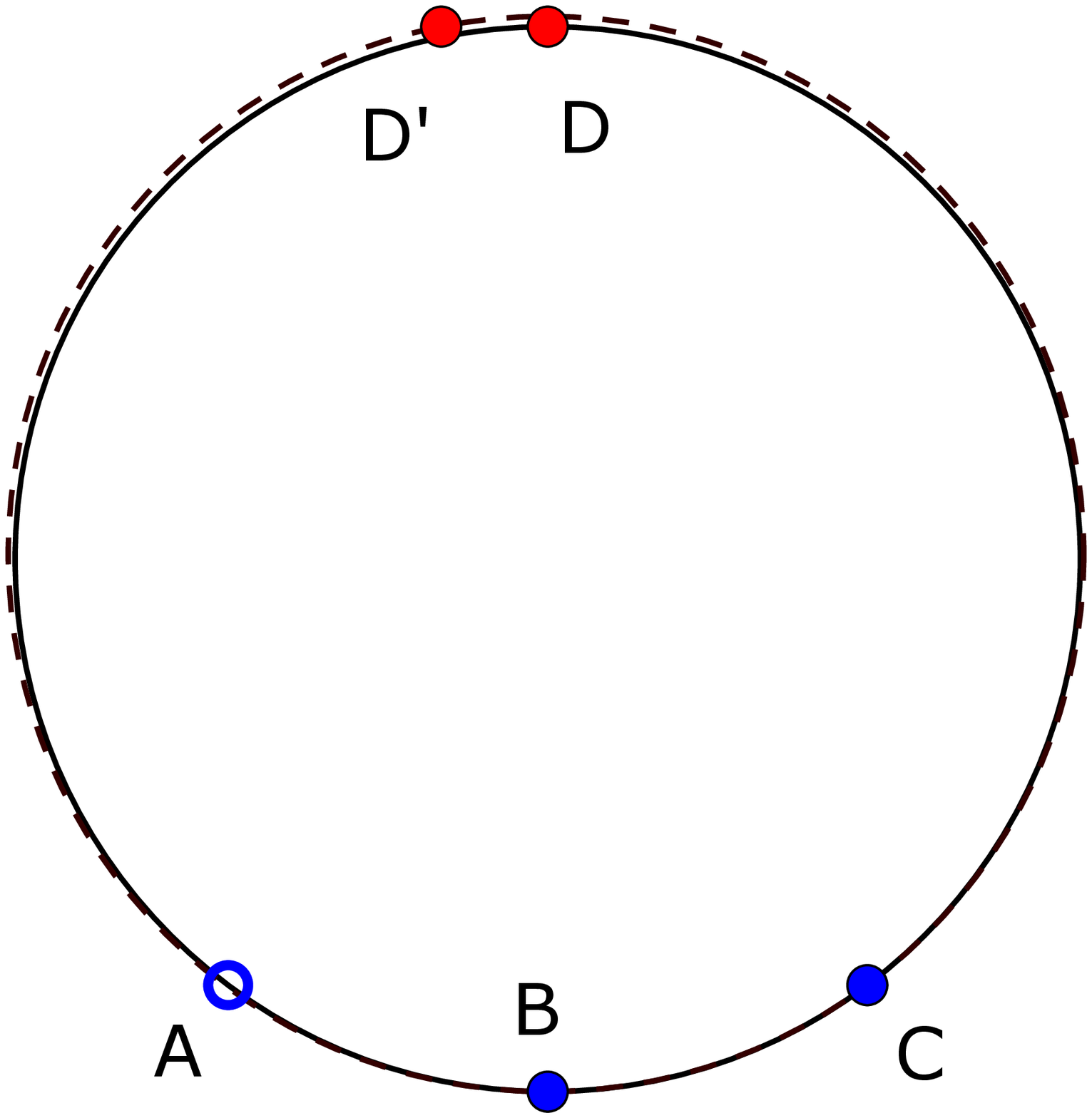}\\
\includegraphics[width=0.7\textwidth]{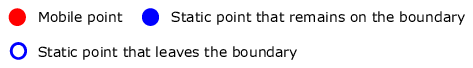}
\caption{Different outcomes of an infinitesimal perturbation of a mobile point}\label{epsilon}
\end{figure}

\subsection{Implementation and analysis}\label{analysis1}

Our proposed algorithm requires computation of intersections between two rational parametric curves and between a rational parametric curve and a farthest-point Voronoi face. Consider two rational parametric curves given by $\varphi_1(t) = \left(\frac{a_i(t)}{c_i(t)}\right)_{i=1}^{d}$ and $\varphi_2(t) = \left(\frac{b_i(t)}{d_i(t)}\right)_{i=1}^{d}$. If these two curves intersect for some parametric value $t_0$, then $\varphi_1$ and $\varphi_2$ satisfy $$\frac{a_i(t_0)}{c_i(t_0)} = \frac{b_i(t_0)}{d_i(t_0)}, \quad i=1, \ldots, d.$$
Hence finding the intersections amounts to computing the real roots of the univariate polynomials $f_i = a_i(t)d_i(t) - b_i(t)c_i(t)$,  which can be done with arbitrary precision by a wide range of numerical algorithms. The farthest-point Voronoi cells are unbounded convex polytopes, i.e., intersection of finitely many half-spaces. Once one has the list of the bounding hyperplanes of each farthest-point Voronoi cell in implicit form, finding intersections between a curve and the farthest-point Voronoi faces require computing intersections between the curve and certain bounding hyperplanes. Computing intersection between a rational parametric curve $\varphi(t) = \left(\frac{a_i(t)}{b_i(t)}\right)_{i=1}^{d}$ and a hyperplane $\sum_{i=1}^{d}c_ix_i = c$, is equivalent to solving the equation $$\sum_{i=1}^{d}\frac{a_i(t)c_i}{b_i(t)} = c,$$ which again reduces to the root (real) finding problem of a univariate polynomial.

Given the smallest enclosing ball of $S$, the intersections between $\nu(t)$ and $\partial(SEB(S))$ can be computed similarly. Then from the continuity of $\nu(t)$ the intersections can be easily characterized as ‘IN’-point or ‘OUT’-point by checking where $\nu(t)$ lies (inside or outside the $SEB(S)$) at any instant $t$ between two consecutive intersections. The pathological case where the curve $\nu(t)$ tangentially touches the boundary of $SEB(S)$ is to be discarded from $L$.

When the center moves along the arc $\xi(S', \nu(t))$, in step 3(a) it is required to check for intersection between $\xi(S'', \nu(t))$ and $\xi(S', \nu(t))$ for each nonempty $S'' \subset S'$. Since the static points are assumed to be in general position at most $d+1$ static points can appear on the boundary of the ball. So assuming that intersection between any two of such curves can be found in some constant time, computing all the intersections in step 3(a) requires $O(1)$ time (with respect to the input size $n$). 

Step 3(b) computes intersections of $\xi(S', \nu(t))$ and the farthest-point Voronoi faces involving $S'$. Suppose $|S'| = 1$, say $S' = \{s_1\}$. It is sufficient to compute intersection between $\xi(s_1, \nu(t))$ and the farthest-point Voronoi $(d-1)$-faces involving $s_1$ only, because all the lower dimensional faces associated with $s_1$ are incident to it. Clearly there could be at most $n - 1$ of such $(d-1)$-faces. So in the worst case, suppose that the farthest-point Voronoi cell of $s_1$ is given in the form of intersection of $n - 1$ half-spaces 
$$ c_ix_i^T \leq 0, i = 1, \ldots, n-1$$
where $c_i = (c_{i1}, c_{i2}, \ldots, c_{i(d+1)})$ and $x_i = (x_{i1}, x_{i2}, \ldots, x_{id}, 1)$. Each of the bounding hyperplanes $c_ix_i^T = 0$  contain one Voronoi $(d-1)$-face. To find the intersection between the curve and the $(d-1)$-face contained in the hyperplane $c_ix_i^T = 0$, we have to compute the intersection between the curve and the hyperplane $c_ix_i^T = 0$, and check if the solution satisfies the remaining $n - 2$ inequalities. Among the remaining $n - 2$ inequality relations, each time an equality is reported the dimension of the farthest-point Voronoi face is reduced by one. Note that each of these bounding hyperplanes is an orthogonal bisector hyperplane of $s_1$ and an input static point. So once these bounding hyperplanes along with their two defining input points is given, intersections of the curve with the farthest-point Voronoi faces involving $s_1$ is found by computing its intersections with the bounding hyperplanes and checking some inequalities or equalities. In the worst case it takes $O(n^2)$ time.   
Similarly, when $|S'| > 1$, we only need to do the same for any one $s_i \in S'$ only, since farthest-point Voronoi faces involving $S'$ is contained in the set of farthest-point Voronoi faces involving any $s_i \in S'$. Hence computing the intersections in step 3(b) requires $O(n^2)$ time in the worst case.

Also note that since we want the earliest (in $(t_{e_{last}}, T]$) of all these intersections, after finding intersections between two curves (or a curve and a farthest-point Voronoi) we only take earliest in $(t_{e_{last}}, T]$. As the number of such intersections is bounded (with the bound depending only upon the degree of the polynomials in the rational parametric function), it take constant time to find the earliest intersection. Thus step 3 requires $O(n)$ computations in the worst case.

For the `else' part in step 4, suppose that at some event point $t$, $S' \subset S$ lies on the boundary of $SEB(S \cup \{\nu(t)\})$. For sufficiently small $\epsilon > 0$, $S \setminus S'$ is contained in the interior of $SEB(S \cup \{\nu(t + \epsilon)\})$. In order to determine which of the points of $S'$ remain on the boundary of the ball after the $\epsilon$-perturbation, we need to find $S'' \subset S'$ such that $T = S'' \cup \{\nu(t + \epsilon)\}$ satisfies the necessary and sufficient conditions of lemma \ref{Alg1}, i.e., $S \subset CB(T)$ and $cc(T) \in conv(T)$. By the formula in lemma \ref{cc}, $cc(T)$ can be written as an affine combination of the points of $T$. $cc(T) \in conv(T)$ if and only if each coefficient of the affine combination is non-negative. 

% Also $S \subset CB(T)$ if and only if $cc(T)$ lies in the farthest-point Voronoi face defined by $S'$. Again to check whether $cc(T)$ lies in the farthest-point Voronoi face defined by $S'$ it requires $O(n)$ time. Since the points in $S$ are in general position, $S' \leq d + 1$. Hence step 4 it requires $O(n)$ number of computations.

For a successful execution of the algorithm the $\epsilon > 0$ is to be chosen so small that the perturbation doesn't skip the next event point. Also recall that the intersection computations were done numerically with arbitrary precision. If the error bound in calculating the intersection is $\delta$ then $\epsilon$ needs to be greater than $\delta$.

\begin{theorem}\label{Complex}
       Given as inputs the positions of the static points $S$ and the rational parametric function $\nu(t)$ defining the motion of the mobile point, the algorithm computes the parametric equation of the Euclidean 1-center function. 
       
%        Furthermore, with the inputs defined in the Euclidean space of fixed dimension $d$, given an arc of the Euclidean 1-center function and the last event point, the algorithm takes $O(n)$ time to compute the next arc, where $n = |S|$.
\end{theorem}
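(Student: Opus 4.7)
The plan is to prove correctness by induction on the event points produced by the algorithm, using Theorem \ref{Main2} as the backbone. That theorem guarantees that on any interval where the set of points on $\partial(SEB(S\cup V(t)))$ is combinatorially constant, the center function equals $\xi(P(t))$ for a fixed support set $P$. It therefore suffices to establish three things: (a) the algorithm starts with the correct initial arc and support set, (b) given the current arc it locates the next event point exactly, and (c) at the event point it correctly identifies the support set, and hence the closed-form, of the subsequent arc. The base case (a) is immediate from the input assumption that the first arc and its support set are given (they are consistent with $SEB(S\cup\{\nu(0)\})$).

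For (b), I would argue that the algebraic expression of $\varepsilon(t)$ can change only when the set of points on $\partial(SEB(S\cup\{\nu(t)\}))$ changes combinatorially, and by Lemma \ref{WhereItLies2} together with an explicit case analysis this occurs in precisely the three scenarios listed in Section \ref{algo1}. A new static point appears on the boundary exactly when the current arc $\xi(S',\nu(t))$ crosses a farthest-point Voronoi face involving $S'$; a static point leaves the boundary exactly when $\xi(S',\nu(t))$ and $\xi(S'\setminus S'',\nu(t))$ meet; and the mobile point entering or leaving is captured by the sorted list $L$ of intersections with $\partial(SEB(S))$. Step 3 of the algorithm computes all such candidate intersections, and Step 4 takes the earliest one in $(t_{e_{\text{last}}},T]$, so the next event point is correctly identified.

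For (c), at the new event point $t_{e_{\text{new}}}$ the algorithm fixes a sufficiently small $\epsilon>0$ and uses Lemma \ref{Alg1} to search for a subset $T=S''\cup\{\nu(t_{e_{\text{new}}}+\epsilon)\}$ satisfying $S\subset CB(T)$ and $cc(T)\in\mathrm{conv}(T)$. By Lemma \ref{Alg1}, this $T$ is exactly $(S\cup\{\nu(t_{e_{\text{new}}}+\epsilon)\})\cap\partial(SEB(S\cup\{\nu(t_{e_{\text{new}}}+\epsilon)\}))$. Because $\epsilon$ is chosen strictly smaller than the gap to the following event point, the support set is constant on $(t_{e_{\text{new}}},t_{e_{\text{new}}}+\epsilon]$ and therefore coincides with the support set on the entire next arc. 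The formula in Lemma \ref{cc} (combined with Lemma \ref{maxaff} when affine dependence occurs, exactly as in Case 3 of Theorem \ref{Main1}) then yields the closed form $\xi(T)$ of that arc. Termination follows from the fact that each candidate intersection reduces to a univariate polynomial on the compact interval $[0,T]$, which has only finitely many roots, so only finitely many event points can arise.

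The main obstacle is the rigorous justification of the $\epsilon$-perturbation in step (c): one must simultaneously ensure that $\epsilon$ exceeds the numerical error bound $\delta$ incurred by the polynomial root-finder, yet remains smaller than the distance to the following event point. The rationality of the motion makes the event points isolated roots of univariate polynomials, so in principle a positive separation exists, but a fully rigorous bound would require an a priori lower estimate on the spacing between consecutive event points in terms of the degrees and coefficient sizes of the input polynomials. Once this precision-aware choice of $\epsilon$ is granted, the inductive invariant is preserved and the algorithm outputs the complete piecewise description of $\varepsilon(t)$ on $[0,T]$.
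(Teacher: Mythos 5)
Your proposal follows essentially the same route as the paper's own proof, which is organized as an initialization/maintenance/termination argument for the loop invariant that the current arc and last event point are known; your parts (a), (b), (c) and your termination paragraph correspond to it directly, with somewhat more detail on why the three intersection types exhaust the possible combinatorial changes of the boundary points. The $\epsilon$-perturbation issue you flag is also left informal in the paper, which only remarks that $\epsilon$ must be chosen smaller than the gap to the next event point yet larger than the numerical error bound $\delta$ of the root-finder.
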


\begin{proof}

Assume that $\mathcal{FPVD}(S)$, $SEB(S)$ and $SEB(S \cup \nu(0))$ are given or already computed. 

\emph{Initialization:} For the initialization of step 3, the required event point is set in step 2 and the first arc of the 1-center function is given. The rest is obvious.

\emph{Termination:} The program terminates if and only if it executes the END command in step 4 when no event point is found. The candidates for the event points are all obtained from the intersections computed in step 1 and step 3. Since $S$ is finite and $\nu(t)$ is a rational parametric function, there are only finitely many intersections in the interval $[0,T]$. Hence at some stage no event point will be found.

\emph{Maintenance:} It follows from the earlier discussions, that in each iterative step with the current arc of the 1-center and the last event point known, step 3 finds the next event point and step 4 determines the equation of the subsequent arc. $\Box$

% For the second part of the theorem, suppose that at some stage we know the current arc of the 1-center function and the last event point. Then it implies that step 2 and hence step 1 has already been executed. It follows from the earlier discussions in section \ref{analysis1} that in the worst case it requires $O(n)$ time to compute all the intersections in step 3. Also note that since we want the earliest (in $(t_{e_{last}}, T]$) of all these intersections, after finding intersections between two curves (or a curve and a farthest-point Voronoi) we only take earliest in $(t_{e_{last}}, T]$. As the number of such intersections is bounded (with the bound depending only upon the degree of the polynomials in the rational parametric function), it take constant time to find the earliest intersection. Thus step 3 requires $O(n)$ computations in the worst case. Also as discussed earlier in section \ref{analysis1}, step 4 requires $O(1)$ computations. Hence in each iteration given the last arc of the Euclidean 1-center function and the last event point, it takes $O(n)$ time to compute the next arc.  $\Box$

\end{proof}

\subsection{Discussion for the case of multiple mobile points}\label{algo2}

The main governing principle being the same, the algorithm described in section \ref{algo1} can be modified for the case of multiple mobile points. However, there are certain stark differences between the case of a single mobile point and that of multiple mobile points, making the later particularly complicated and inefficient. Similar to the case of a single mobile point, in each iteration given the present arc of the Euclidean 1-center function, we have to consider each of the possible candidates for the next arc and compute their intersections. But the number of cases to consider in each step to find the next event point becomes quite large in the case of multiple mobile points, which makes the algorithm highly inefficient. We discuss in the following the differences between the single and the multiple mobile point cases and how the algorithm needs to be modified.

 Suppose that at some step the static points $S' \subseteq S$ is on the boundary of the $SEB$. Since the static points are assumed to be in general position, $|S'| \leq d +1$. However, any number of mobile points can appear on the boundary at any time. In that case the support set of the arc is a maximum subset of the set of points on the boundary of the $SEB$ that remain affinely independent throughout the corresponding time subinterval, which can be obtained in the usual greedy manner. Note that the choice of the support set , i.e., the maximum affinely independent set should not hamper the algorithm because if for two different support sets $P_1$ and $P_2$, the circumcenter functions $\xi(P_1)$ and $\xi(P_2)$ are equal in some interval then they are equal in the entire domain as they are rational parametric functions. As earlier, we assume that for initialization, the first arc of the 1-center function is known. Now suppose that at some step the arc of the 1-center function is given by $\xi(S', V')$, with $|S' \cup V'| \leq d +1$. In order to find the next event point we consider all the possible candidates for the next arc and compute their intersections. A change of arc of the 1-center function is caused by some combinatorial change of the points on the boundary of the $SEB$, i.e., some points appear on or leave the boundary of the $SEB$.

 No mobile point is on the boundary of the $SEB$ at some instant $t$ if and only if they are all in the interior of $SEB(S)$ at $t$. Similar to step 1 of the algorithm in section \ref{algo1}, we first have to compute the intersections of each $\nu_i(t)$ with $\partial(SEB(S))$ and identify the time intervals when all the mobile points are in the interior of $SEB(S)$. The left and right extremities of these time intervals will be the analogues of the `IN' and `OUT' points. 
 
 Note that since in the multiple mobile point case any number of points can appear on the boundary of the $SEB$, unlike the case of a single mobile point there are situations where the support sets of two consecutive arcs may be completely unrelated. Hence for step 3, to determine the subsequent arc we need to consider $\emph{all}$ subsets of $S \cup V$ of cardinality atleast 2 and no more than $d + 1$. This amounts to $O((n + m)^{d + 1})$ many cases to consider.

In case of a single mobile point, we averted heavy computations by computing intersections of the arc with the farthest-point Voronoi faces to detect addition of new static points on the boundary of the $SEB$. But this not possible in the present case. When $S' \neq \emptyset$, addition of new static points on the boundary of the $SEB$ can be detected separately by computing intersections between $\xi(S', V')$ and the farthest-point Voronoi faces involving $S'$. But when $S' = \emptyset$, addition of a single mobile point on the boundary of the $SEB$ does not cause an intersection with a farthest-point Voronoi face. Also as discussed in section \ref{theory}, when $S' = \emptyset$ intersection of the curve with a farthest-point Voronoi face does not always imply addition of static points. This situation never arose in the single mobile point case as the $SEB$ always had at least one static point on its boundary.

\section{Conclusion and future works}

In this paper we have proposed an algorithm that computes the parametric equation of the Euclidean 1-center function in $\mathbb{R}^d$, $d \geq 2$, for a system of $n$ static points  and $m$ mobile points having motion defined by rational parametric functions. In case of a single mobile point, given an arc of the 1-center function our algorithm computes the subsequent arc in $O(n^2)$ time (assuming that the farthest-point Voronoi diagram and the smallest enclosing ball of the set of static points is given or already computed). However, in case of multiple mobile points the algorithm loses its efficiency, especially for large values of $d$, due to the large number of exhaustive cases that are needed to consider in different iterations. The immediate course of future research would be to improve on the average complexity of the algorithm.

 %  The most crucial property based on which the algorithm works is the continuity of the Euclidean 1-center function. Using the continuity of the Euclidean 1-center we have discretized the continuous time span by identifying certain combinatorial changes at discrete time instants, namely event points, that determines the 1-center function over the entire time interval. This motivates the study toward giving precise geometric characterizations of other variants of mobile facility functions or their continuous approximations.

 Also there could be real life problems where it may not be appropriate to model the mobile and static agents as points. For example in the context of multi-robot systems it may be more befitting to consider the robots as disks in the plane. It is easy to see that the center of the smallest enclosing ball of a set of balls of same size and that of their centers coincide. But this does not hold true for a set of balls of different sizes. However, analogues of lemma \ref{WhereItLies1} and \ref{WhereItLies2} can be proved using the following observation: the center of the smallest enclosing ball of a set of balls lies on a face of the additively weighted Voronoi diagram of their centers weighted by the corresponding radii. Devising an efficient algorithm for this variant of the problem is another interesting direction for future research.

%So far very few work has been done toward giving exact geometric characterizations of the facility functions for points moving continuously in the Euclidean space. In fact, to the best of our knowledge the only work in this direction is by Banik et. al. \cite{Banik14} where they gave a complete geometric characterization of the locus of the Euclidean 1-center for a very restrictive case. In this paper, we have given a precise algebraic description of the Euclidean 1-center in the general case, (i.e., for a system of $n$ static points in $\mathbb{R}^d$, $d \geq 2$, and $m$ mobile points having motion defined by rational parametric functions) and have proposed an algorithm that computes the parametric equation of the Euclidean 1-center function.

\bibliography{sebbib}

\end{document}